\newcommand\drop[1]{}
\newtheorem{theorem}{Theorem}
\newtheorem{lemma}[theorem]{Lemma}
\newtheorem{proposition}[theorem]{Proposition}
\newtheorem{definition}[theorem]{Definition}
\newcommand{\qed}{\hbox{\rule{6pt}{6pt}}}
\newenvironment{proof}[1][]{\paragraph{Proof{#1}}}{\hfill\qed\medskip\\}
\newcommand\req[1]{(\ref{#1})}
\newcommand\Prp[1]{\Pr\left[{#1}\right]}
\def\eps{\varepsilon}
\def\argmin{\textnormal{argmin}}
\newcommand\E{{\sf E}}
\newcommand\Var{{\sf Var}}
\newcommand{\hash}{h}
\newcommand{\ul}{\underline}
\newcommand{\ol}{\overline}
\newcommand\kmin{$k\times$min}
\newcommand\botk{bottom-$k$}
\newcommand{\uw}[1]{\underline1w_{{#1}}}
\newcommand{\fw}[1]{fw_{{#1}}}
\newenvironment{description*}%
  {\vspace{-1ex}\begin{description}%
    \setlength{\itemsep}{-0.5ex}%
    \setlength{\parsep}{0pt}}%
  {\end{description}}
\title{Bottom-k and Priority Sampling, Set Similarity and Subset Sums  
with Minimal  Independence\footnote{A short preliminary version of
this paper was presented at STOC'13 \cite{Tho13:bottomk}.}}
\author{Mikkel Thorup\\
AT\&T Labs---Research and University of Copenhagen\\
\texttt{mikkel2thorup@gmail.com}}
\begin{document}
\maketitle
\begin{abstract} We consider {\botk} sampling for a set $X$, picking
a sample $S_k(X)$ consisting of the $k$ elements that are smallest
according to a given hash function $h$. With this sample we can estimate the 
frequency
$f=|Y|/|X|$ of any subset $Y$ as $|S_k(X)\cap Y|/k$. A standard application
is the estimation of the Jaccard similarity $f=|A\cap B|/|A\cup B|$ between 
sets $A$ and $B$. Given the {\botk} samples from $A$ and $B$, we construct
the {\botk} sample of their union as $S_k(A\cup B)=S_k(S_k(A)\cup S_k(B))$, and then the similarity is estimated as $|S_k(A\cup B)\cap S_k(A)\cap S_k(B)|/k$.

We show here that even if the hash function is only 2-independent, the
expected relative error is $O(1/\sqrt{fk})$. For $fk=\Omega(1)$ this
is within a constant factor of the expected relative error with truly random hashing.

For comparison, consider the classic approach of repeated min-wise hashing,  where we use $k$ independent hash functions $h_1,...,h_k$, storing the smallest
element with each hash function. For min-wise hashing, there
can be a constant bias with constant independence, and
this is not reduced with more repetitions $k$. Recently Feigenblat et al.~showed
that {\botk} circumvents the bias if the hash function is 8-independent
and $k$ is sufficiently large. We get down to 2-independence for any
$k$. Our result is based on a simple union bound, transferring
generic concentration bounds for the hashing scheme to the
{\botk} sample, e.g., getting stronger probability error bounds with
higher independence.

For weighted sets, we consider priority sampling which adapts
efficiently to the concrete input weights, e.g., benefiting strongly
from heavy-tailed input. This time, the analysis is much more involved,
but again we show that generic concentration bounds can be applied. 
\end{abstract}

\drop{
\category{E.1}{Data}{Data Structures}
\category{G.1.2}{Discrete Mathematics}{Probability and Statistics}

\terms{Algorithms, Measurement, Performance, Reliability, Theory}

\keywords{Sampling, Independence, Estimation}
}

\drop{
\begin{abstract} We consider the problem of computing fixed-sized
signatures of sets, so that given the signatures of any two sets $A$ and
$B$, we can estimate their {\em Jaccard similarity\/} $f=|A\cap B|/|A\cup B|$. 
There are two standard approaches:
\begin{description}
\item[$k\times$min-wise] The most classic approach is to use $k$ hash
  functions [Broder et al. 1998]. As signature for a set, for each
  hash function, we store the element with the smallest hash.  For the
  similarity between two sets, we use as an unbiased estimator the
  similarity between their signatures.
\item[{\botk}] Another approach is to use a single hash function
  and store the $k$ elements with the smallest hash values (see, e.g.,
  [Cohen Kaplan 2007]).  From the {\botk} samples of $A$ and $B$,
  we easily get the {\botk} sample of $A\cup B$. The fraction
  of the sample which is in $A\cap B$ estimates the similarity of $A$ and $B$.
  If new elements are added on-line, we maintain
  this {\botk} sample using a standard priority queue.
\end{description}
With truly random hash function, the estimates are of the same quality
with a relative standard deviations proportional to $1/\sqrt {fk}$.  
However, from the perspective of $d$-independent hashing
[Wegman and Carter 81], there are some dramatic differences.  On the
negative side, it is known that the $k\times$min-wise has constant
bias with any constant independence [Patrascu Thorup 2011]. Such bias
does not go away no matter how large $k$ is. 

On the positive side, for {\botk} samples, Feigenblat, Porat, and
Shiftan~[2011] recently proved that the bias vanishes for $k\gg 1$ if
we use $8$-independent hashing. Their results are cast in their new framework of
``$d$-$k$-min-wise hashing'', and the translation to our context is not
immediate. Using $8$-independent hashing, it appears
they would get an expected relative error (including bias) of $O(1/\sqrt{fk})$---only a constant factor worse than with truly
random hashing.

In this paper, we prove that the {\botk} approach preserves the expected
relative error of $O(1/\sqrt{fk})$ using only $2$-independent
hashing and this holds for all sample sizes $k$ including $k=1$. We can therefore employ the extremely fast
multiplication-shift scheme of Dietzfelbinger [1996], which, depending on the computer and key length, is 10-25 times
faster than the fastest known $8$-independent scheme. Our positive finding
contrasts recent negative results on the insufficiency of low
independence, e.g., that linear probing needs the 5-independence 
[Patrascu Thorup 2011] proved sufficient by Pagh et al. [2007].

Our proof is very simple, based on a union bound, transferring
generic concentration bounds for the hashing scheme employed by the
{\botk} samples, e.g., getting stronger error bounds with
higher independence.

For weighted sets, we consider priority sampling which adapts
efficiently to the concrete input weights, e.g., benefiting strongly
from heavy-tailed input. This time, the analysis is much more involved,
but again we show that generic concentration bounds can be applied. 
\end{abstract}
}
\section{Introduction}

The concept of min-wise hashing (or the ``MinHash algorithm''
according to
\footnote{See \url{http://en.wikipedia.org/wiki/MinHash}} ) is a basic
algorithmic tool suggested by Broder et al.~\cite{Broder97onthe,broder98minwise} for
problems related to set similarity and containment. After the initial
application of this algorithm in the early Altavista search engine to
detecting and clustering similar documents, the scheme has reappeared
in numerous other applications\footnotemark[1] and is now a standard
tool in data mining where it is used for estimating similarity
\cite{broder98minwise,Broder97onthe,Broder97minwise},
rarity \cite{DatarM02estimatingrarity},
document duplicate detection
\cite{Broder00,MJS07,YC06,Henzinger06},
etc  \cite{BHP09,BPR09,CDFGIMUY01,SWA03}. 

In an abstract mathematical view, we have two sets, $A$ and $B$,  and we are interested in understanding their overlap in the
sense of the Jaccard similarity $f = \frac{|A\cap B|}{|A\cup B|}$. In order to
do this by sampling, we need sampling correlated between the two
sets, so we sample by hashing. 
Consider a hash function $h: A\cup B\to
[0,1]$. For simplicity we assume that $h$ is fully random, and has enough
precision that no collisions are expected.
The main mathematical observation is that $\Pr[\argmin h(A) = \argmin h(B)]$ is
precisely $f=|A\cap B|~\big/~|A\cup B|$. Thus, we may sample the element
with the minimal hash from each set, and use
them in $[\argmin\, h(A)=\argmin\, h(B)]$ for an unbiased estimate
of $f$. Here, for a logical statement $S$, $[S]=1$ if $S$ is true; otherwise
$[S]=0$.

For more concentrated estimators, we use repetition with $k$ independent hash functions, $h_1,...,h_k$. For
each set $A$, we store $M^k(A)=(\argmin\, h_1(A),...,\argmin\,
h_k(A))$, which is a sample with replacement from $A$. The Jaccard similarity between sets $A$ and $B$ is now estimated as
$|M^k(A)\cap M^k(B)|/k$ where $|M^k(A)\cap M^k(B)|$ denotes the number of agreeing
coordinates between $M^k(A)$ and $M^k(B)$. 
We shall refer to this approach as repeated min-wise or {\em \kmin}.

For our discussion, we consider the very related application where we wish
to store a sample of a set $X$ that we can use to estimate the
frequency $f=\frac{|Y|}{|X|}$ of any subset $Y\subseteq X$. The idea is
that the subset $Y$ is not known when the sample from $X$ is made. The subset
$Y$ is revealed later in the form of a characteristic function that can tell if (sampled) elements belong to $Y$. Using the
{\kmin} sample $M^k(X)$, we estimate the frequency as $|M^k(X)\cap
Y|/k$ where $|M^k(X)\cap
Y|$ denotes the number of samples from $M^k(X)$ in $Y$. 

Another classic approach for frequency estimation is to use just one
hash function $h$ and use the $k$ elements from $X$ with the smallest
hashes as a sample $S_k(X)$. This is a sample without replacement from
$X$. As in \cite{CK07}, we refer to this as a {\botk} sample. The
method goes back at least to \cite{HHW97}.  The frequency of $Y$ in
$X$ is estimated as $|Y \cap{} S_k(X)|/k$. Even though surprisingly fast
methods have been proposed to compute {\kmin} \cite{BP10}, the bottom-$k$ signature is
much simpler and faster to compute. In a single pass through a set, we only
apply a single hash
function $h$ to each element, and use a max-priority queue to maintain the $k$ smallest
elements with respect to $h$. 

It is standard\footnotemark[1] to use bottom-$k$ samples to
estimate the Jaccard similarity between sets $A$ and $B$, for this
is exactly the frequency of the intersection in the union. First we
construct the bottom-$k$ sample $S_k(A\cup{}B)=S_k(S_k(A)\cup S_k(B))$ of the
union by picking the $k$ elements from $S_k(A)\cup S_k(B)$ with the
smallest hashes. Next we return $|S_k(A)\cap{}S_k(B)\cap{}S_k(A\cup{}B)|/k$.

Stepping back, for subset frequency, we generally assume that we can
identify samples from the subset. In the application to set
similarity, it important that the samples are coordinated via hash
functions, for this is what allows us to identify samples from the
intersection as being sampled in both sets. In our mathematical
analysis we will focus on the simpler case of subset frequency
estimation, but it the application to set similarity that motivates
our special interest in sampling via hash functions.

\paragraph{Limited independence}
The two approaches {\kmin} and {\botk} are similar in spirit, starting
from the same base $1\times$min $=$ bottom-$1$. With truly random hash
functions, they have essentially the same relative standard deviation (standard deviation divided by expectation) bounded by $1/\sqrt{fk}$ where $f$ is the set similarity
or subset frequency. The two approaches are, however, very different
from the perspective of pseudo-random hash functions of limited
independence \cite{wegman81kwise}: a random hash function $h$ is $d$-independent if the hash values of any $d$ given elements are totally random.

With min-wise hashing, we have a problem with bias in the sense of
sets in which some elements have a better than average chance of
getting the smallest hash value. It is known that $1+o(1)$ bias
requires $\omega(1)$-independence \cite{PT10}.  This bias is not
reduced by repetitions as in {\kmin}. However, recently Porat et
al. \cite{FPS12} proved that the bias for bottom-$k$ vanishes for
large enough $k\gg 1$ if we use $8$-independent hashing. Essentially they
get an expected relative error of $O(1/\sqrt{fk})$, and error includes
bias. For $fk=\Omega(1)$, this is only a constant factor worse than
with truly random hashing. Their results are cast in a new framework
of ``$d$-$k$-min-wise hashing'', and the translation to our context is
not immediate.

\paragraph{Results}
In this paper, we prove that bottom-$k$ sampling preserves the
expected relative error of $O(1/\sqrt{fk})$  with
$2$-independent hashing, and this holds for any $k$ including $k=1$.
We note that when $fk=o(1)$, then $1/\sqrt{fk}=\omega(1)$, so our
result does not contradict a possible large bias for $k=1$.

We remark that we also get an $O(1/\sqrt{(1-f)k})$
bound on the expected relative error. This is important if we estimate 
the dissimilarity $1-f$ of sets with large similarity $f=1-o(1)$.

For the more general case of weighted sets, we consider priority
sampling \cite{DLT07:priority} which adapts near-optimally to the
concrete input weights \cite{Sze06}, e.g., benefiting strongly from
heavy-tailed input. We show that 2-independent hashing suffices for
good concentration.

Our positive finding with 2-independence contrasts recent negative results on the
insufficiency of low independence, e.g., that linear probing needs the
5-independence \cite{PT10} that was proved sufficient by Pagh et
al. \cite{pagh07linprobe}.

\paragraph{Implementation}
For 2-independent hashing we can use the fast multiplication-shift
scheme from \cite{Die96}, e.g., if the elements are 32-bit keys, we
pick two random 64-bit numbers $a$ and $b$. The hash of key $x$ is
computed with the C-code $(a*x+b)>> 32$, where $*$ is 64-bit
multiplication which as usual discards overflow, and $>>$ is a right shift.
This is 10-20 times faster than the fastest known $8$-independent
hashing based on a degree 7 polynomial tuned for a Mersenne prime
field \cite{TZ12}\footnote{See Table 2 in \cite{TZ12} for comparisons
  with different key lengths and computers between
  multiplication-shift (TwoIndep), and tuned polynomial hashing
  (CWtrick). The table considers polynomials of degree 3 and 4, but
  the cost is linear in the degree, so the cost for degree 7 is easily
  extrapolated.}.

\paragraph{Practical relevance}
We note that Mitzenmacher and Vadhan \cite{mitzenmacher08hash} have
proved that 2-independence generally works if the input has enough
entropy.  However, the real world has lots of low entropy data. In
\cite{TZ12} it was noted how consecutive numbers with zero entropy made
linear probing with 2-independent hashing extremely unreliable. This
was a problem in connection with denial-of-service attacks 
using consecutive IP-addresses. For our set similarity, we would have
similar issues in scenarios where small numbers are more common, hence
where set intersections are likely to be fairly dense intervals of small
numbers whereas the difference is more likely to consists of large random
outliers. Figure \ref{fig:exper} presents an experiment showing what
happens if we try to estimating such dissimilarity with 2-independent hashing.
\begin{figure*}
\begin{center}
\begin{tabular}{c}
\kmin\\[-.5ex]
\hspace{-.2in}\includegraphics[width=3.1in]{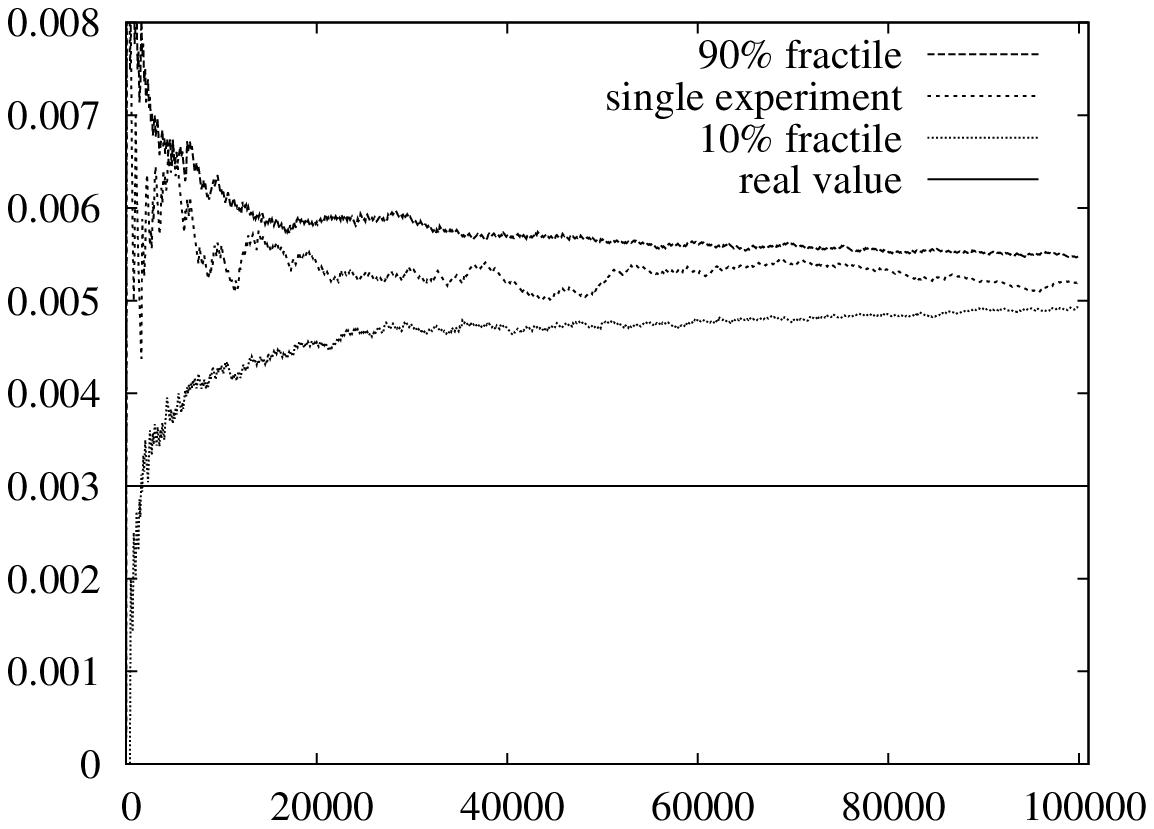}\\[-4ex]
\end{tabular}~\begin{tabular}{c}
\botk\\[-.5ex]
\includegraphics[width=3.1in]{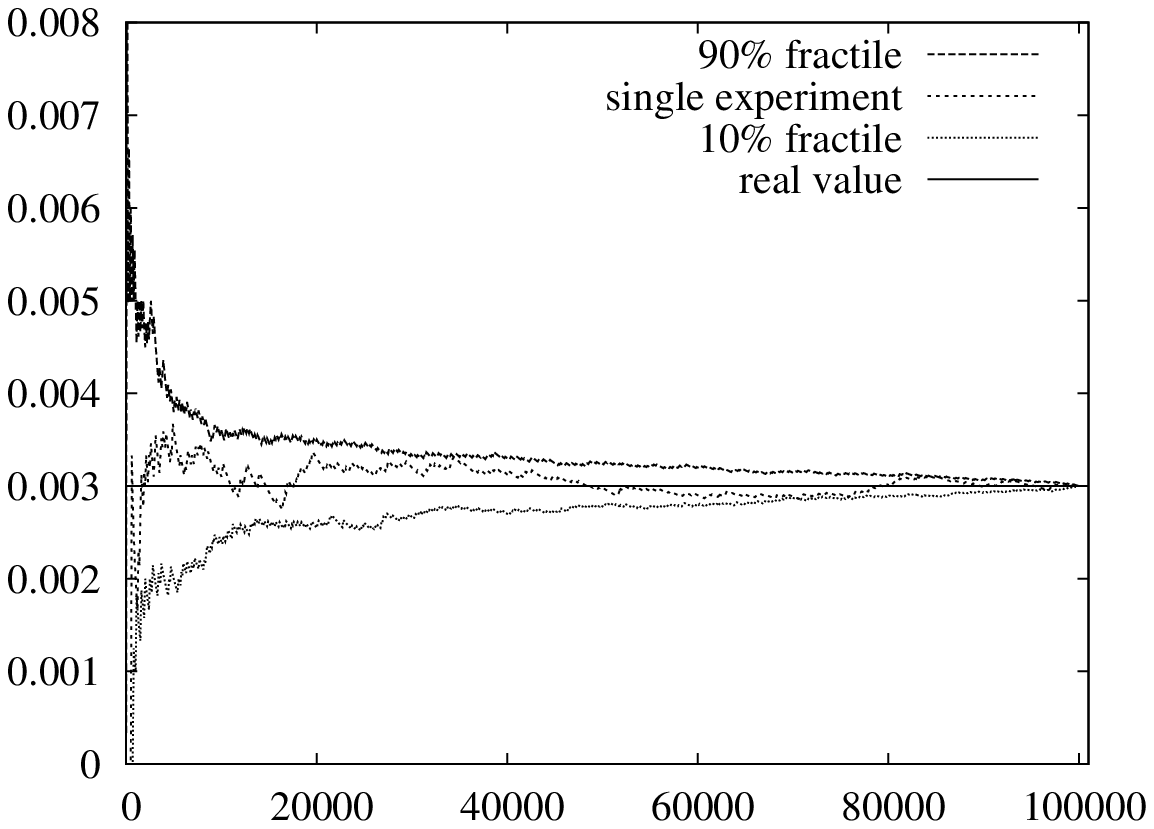}\\[-4ex]
\end{tabular}
\end{center}
\caption{Experiment with set consisting of 100300 32-bit keys.  It has
  a ``core'' consisting of the consecutive numbers $1,...100000$. In addition it
  has  $300$ random ``outliers''.  Using $k$ samples from the whole set,
  we want to estimate the frequency of the outliers. The true frequency is
  $\frac{300}{100300}\approx 0.003$. We used $k=1,...,100000$ in
  {\kmin} and {\botk} and made one hundred experiments.  For each $k$,
  we sorted the estimates, plotting the 10th and 90th value, labeled
  as 10\% and 90\% fractile in the figures. We also plotted the
  results from a single experiment. For readability, only one in every
  100 values of $k$ is plotted. Both schemes converge, but due to
  bias, {\kmin} converges to a value that is $70\%$ too large. Since
  {\botk} does sampling without replacement, it becomes exact when the
  number of samples is the size of the whole set.  The bias is a
  function of the structure of the subset within the whole set, e.g.,
  the core set must have a negative bias complimenting the positive
  bias of the outliers.  It is therefore not possible to correct for
  the bias if one only has the sample available.  }\label{fig:exper}
\end{figure*}

Stepping back, the result Mitzenmacher and Vadhan is that
2-independence works for sufficiently random input. In particular, we
do not expect problems to show up in random tests. However, this does
not imply that 2-independent hashing can be trusted on real data
unless we have specific reasons to believe that the input has high
entropy. In Figure~\ref{fig:exper}, {\botk} performs beautifully with
2-independent hashing, but no amount of experiments can demonstrate
general reliability. However, the mathematical result of this paper is
that {\botk} can indeed be trusted with 2-independent hashing: the
expected relative error is $O(1/\sqrt{fk})$ no matter the structure of
the input.

\paragraph{Techniques}
To appreciate our analysis, let us first consider the trivial case
where we are given a non-random threshold probability $p$ and sample
all elements that hash below $p$. As in \cite{DLT05} we refer to this as
{\em threshold sampling}. Since the hash of a element $x$ is uniform in
$[0,1]$, this samples $x$ with probability $p$. The sampling of $x$
depends only on the hash value of $x$, so if, say, the hash function
is $d$-independent, then the number of samples is the sum of
$d$-independent 0-1 variables.  This scenario is very well understood
(see, e.g., \cite{DGMP92,schmidt95chernoff}).

We could set $p=k/n$, and
get an expected number of $k$ samples. Morally, this should be
similar to a bottom-$k$ sample, which is what 
we get if we end up with exactly $k$ samples, that is,
if we end up with $h_{(k)}<p\leq h_{(k+1)}$ where
$h_{(i)}$ denotes the $i$th smallest hash value. What complicates the
situation is that $h_{(k)}$ and $h_{(k+1)}$ are random variables depending 
on all the random hash values.

An issue with threshold sampling is that the number of samples is variable. This
is an issue if we have bounded capacity to store the samples. With
$k$ expected samples, we could put some limit $K\gg k$ on the number of
samples, but any such limit introduces dependencies that have to be understood.
Also, if we have room for $K$ samples, then it would seem wasteful not 
fill it with a full bottom-$K$ sample.

Our analysis of bottom-$k$ samples is much simpler than the one in
\cite{FPS12} for $8$-independent hashing with $k\gg 1$. With a union bound we reduce
the analysis of bottom-$k$ samples to the trivial case of threshold
sampling.  Essentially we only get a constant loss in the error
probabilities. With 2-independent hashing, we then apply Chebyshev's
inequality to show that the expected relative error is
$O(1/\sqrt{fk})$. The error probability bounds are immediately
improved if we use hash functions with higher independence.

It is already known from \cite{BJKST02} that we can use a
2-independent bottom-$k$ sample of a set to estimate its size $n$ with
an expected error of $O(\sqrt{n})$. The estimate is simply the inverse of
the $k$th smallest sample. Applying this to two $\Theta(n)$-sized sets
and their union, we can estimate $|A|$, $|B|$, $|A\cup B|$ and $|A\cap
B|=|A|+|B|-|A\cup B|$ each with an expected error of $O(\sqrt{n})$.
However, $|A\cap B|$ may be much smaller than $O(\sqrt{n})$.
If we instead multiply our estimate of the similarity $f=|A\cap
B|/|A\cup B|$ with the estimate of $|A\cup B|$,
the resulting estimate of $|A\cap B|$ is
\[(1\pm O(1/\sqrt{fk})f(|A\cup B|\pm O(\sqrt{n}))=|A\cap B|\pm O(\sqrt{|A\cap B}).\]

The analysis of
priority sampling for weighted sets is much more delicate,
but again, using union bounds, we show that generic concentration bounds apply.

\section{Bottom-k samples}
We are given a set of $X$ of $n$ elements. A hash function maps
the elements uniformly and collision free into $[0,1]$. Our
bottom-$k$ sample $S$ consists of the $k$ elements with the lowest
hash values. The sample is used to estimate the frequency $f=|Y|/|X|$ of any subset $Y$ of $X$ as $|Y\cap S|/k$.  With 2-independent hashing, we will prove the
following error probability bound for any $r\leq \bar r=\sqrt k/3$:
\begin{equation}\label{eq:2-ind}
\Prp{||Y \cap S| - fk| > r\sqrt{fk}} \leq 4/r^2.
\end{equation}
The result is obtained via a simple union bound where stronger hash functions
yield better error probabilities. 
With $d$-independence with $d$ an even constant, the probability bound
is $O(1/r^d)$. 

It is instructive to compare $d$-independence with the idea of storing $d$ independent {\botk} samples, each based on 2-independence, and use the median estimate. Generally, if the probability of a certain deviation is $p$, the deviation probability for the median is bounded by $(2ep)^{d/2}$, so the $4/r^2$ from \req{eq:2-ind} becomes
$(2e4/r^2)^{d/2}<(5/r)^d$, which is the same type of probability that we get with
a single $d$-independent hash function. The big advantage of a single $d$-independent hash function is that we only have to store a single {\botk} sample.

If we are willing to use much more space for the hash function, then
we can use twisted tabulation hashing \cite{PT13:twisted} which is
very fast, and then we get exponential decay in $r$ though only down
to an arbitrary polynomial of the space used.

In order to show that the expected relative error is $O(1/\sqrt{fk})$, 
we also prove the following bound for $fk\leq 1/4$:
\begin{equation}\label{eq:small-fk}
\Pr[|Y\cap S|\geq\ell]=O(fk/\ell^2+\sqrt f /\ell).
\end{equation}
From \req{eq:2-ind} and \req{eq:small-fk}  we get
\begin{proposition} For bottom-$k$ samples based on 2-independent hashing,
a fraction $f$ subset is estimated with an  expected relative error of 
$O(1/\sqrt{fk})$.
\end{proposition}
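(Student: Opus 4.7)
Let $Z=|Y\cap S|$, so the estimator is $Z/k$ and the relative error is $|Z/k-f|/f = |Z-fk|/(fk)$. The goal is therefore to show $\Ep{|Z-fk|}=O(\sqrt{fk})$. The plan is to feed the layer-cake identity $\Ep{|Z-fk|}=\int_0^\infty \Prp{|Z-fk|\geq t}\,dt$ with the two tail bounds \req{eq:2-ind} and \req{eq:small-fk}, splitting into the cases $fk\geq 1/4$ and $fk<1/4$.

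For $fk\geq 1/4$, substitute $t=r\sqrt{fk}$ to get
\[\Ep{|Z-fk|}=\sqrt{fk}\int_0^\infty \Prp{|Z-fk|\geq r\sqrt{fk}}\,dr,\]
and split the $r$-axis at $r=1$ and at $\bar r=\sqrt k/3$, observing that the event is impossible once $r\sqrt{fk}>k$, i.e.\ once $r>\sqrt{k/f}$, because $0\leq Z\leq k$. On $[0,1]$ use the trivial bound $\Prp{\cdot}\leq 1$ for a contribution of at most $1$; on $[1,\bar r]$ apply \req{eq:2-ind} so that $\int_1^{\bar r}4/r^2\,dr\leq 4$; on $[\bar r,\sqrt{k/f}]$ use the monotone estimate $\Prp{\cdot}\leq 4/\bar r^2=O(1/k)$, for a contribution of $O(\sqrt{k/f}/k)=O(1/\sqrt{fk})=O(1)$ since $fk=\Omega(1)$. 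Adding these up gives $\Ep{|Z-fk|}=O(\sqrt{fk})$, and dividing by $fk$ yields the claimed $O(1/\sqrt{fk})$ relative error.

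For $fk<1/4$, the estimator is essentially zero, and this is the main obstacle: Chebyshev-style bounds like \req{eq:2-ind} do not control the relative error when the ``typical'' value of $Z$ is $0$. The bound \req{eq:small-fk} is precisely tailored for this regime. Since $Z$ is a nonnegative integer with $fk<1$, the triangle inequality gives $|Z-fk|\leq fk+Z$, so $\Ep{|Z-fk|}\leq fk+\Ep{Z}$. Using $Z\leq k$ and \req{eq:small-fk},
\[\Ep{Z}=\sum_{\ell=1}^k\Prp{Z\geq\ell}=O\!\left(\sum_{\ell=1}^k\bigl(fk/\ell^2+\sqrt f/\ell\bigr)\right)=O(fk+\sqrt f\log k).\]
Since $fk\leq 1$ we have $fk\leq\sqrt{fk}$, and since $\log k=O(\sqrt k)$ we have $\sqrt f\log k=O(\sqrt{fk})$. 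Hence $\Ep{Z}=O(\sqrt{fk})$ and therefore $\Ep{|Z-fk|}=O(\sqrt{fk})$, again producing relative error $O(1/\sqrt{fk})$.

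The two cases together yield the proposition. The only delicate point is the small-$fk$ regime, where one really does need the Markov-type $\sqrt f/\ell$ term in \req{eq:small-fk}: a pure Chebyshev bound of the form $O(fk/\ell^2)$ would suffice to bound $\Ep{Z}$ by $O(fk)$ but would not by itself clarify why the estimator does not systematically overshoot; the $\sqrt f/\ell$ term is exactly what reins in the tail $\sum_\ell 1/\ell$ to $O(\sqrt{fk})$ after multiplication by $\sqrt f$.
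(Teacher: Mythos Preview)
Your proof is correct and follows essentially the same approach as the paper: the same case split at $fk=1/4$, the same use of \req{eq:2-ind} for the large-$fk$ case and \req{eq:small-fk} for the small-$fk$ case, arriving at the same intermediate bound $O(fk+\sqrt f\log k)=O(\sqrt{fk})$. The only cosmetic differences are that you write out the layer-cake integral explicitly where the paper argues more tersely (``probability of a larger error times maximal error $k$ gives $O(1)$''), and you sum $\sum_{\ell=1}^k\Pr[Z\ge\ell]$ directly where the paper uses a dyadic sum $\sum_i 2^{i+1}\Pr[Z\ge 2^i]$; both give the same estimate. Your closing commentary is slightly off --- a pure $O(fk/\ell^2)$ bound would in fact give $\Ep{Z}=O(fk)\le O(\sqrt{fk})$, so the $\sqrt f/\ell$ term is needed not to tighten the sum but because \req{eq:small-fk} genuinely has that shape --- but this does not affect the argument.
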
 
\begin{proof}
The proof assumes \req{eq:2-ind} and \req{eq:small-fk}.
For the case $fk>1/4$, we will apply \req{eq:2-ind}.
The statement is equivalent to saying that the sample error 
$||Y \cap S| - fk|$ in expectation is bounded by $O(\sqrt{fk})$.
This follows immediately from \req{eq:2-ind} for
errors below $\bar r\sqrt{fk}=k\sqrt{|Y|/n}$. However,
by \req{eq:2-ind}, the probability of a larger error
is bounded by $4/{\bar r}^2=O(1/k)$. The maximal error is
$k$, so the contribution of larger errors to the expected error is $O(1)$. 
This is $O(\sqrt{fk})$ since $fk> 1/4$.

We will now handle the case $fk\leq 1/4$ using \req{eq:small-fk}.
We want to show that the expected absolute error is $O(\sqrt{fk})$.
We note that only positive errors can be bigger than $fk$, so
if the expected error is above $2fk$, the expected number of
samples from $Y$ is proportional to the expected error. 
We have $\sqrt{fk}\geq 2fk$, so for the expected error bound, 
it suffices to prove that the
expected number of samples is $|Y\cap S|=O(\sqrt{fk})$. Using 
\req{eq:small-fk} for the probabilities, we now
sum the contributions over exponentially increasing sample sizes.
\begin{align*}
\E[|Y\cap S|]&\leq\sum_{i=0}^{\lfloor \lg k\rfloor}
\left(2^{i+1} \Pr[|Y\cap S|\geq 2^i]\right)\\
&=\sum_{i=0}^{\lfloor \lg k\rfloor}O\left(2^i (fk/2^{2i}+\sqrt f /2^i)\right)\\
&=O\left(fk+\sqrt f(1+\lg k)\right)=O\left(\sqrt{fk}\right).
\end{align*}
\end{proof}

\subsection{A union upper bound}\label{sec:upper}
First we consider overestimates. 
For positive parameters $a$ and $b$ to be chosen, we will bound the probability 
of the overestimate
\begin{equation}\label{*}
   |Y \cap{} S|> \frac{1+b}{1-a}\,fk.
\end{equation}
Define the threshold probability
\[p=\frac k{n(1-a)}.\]
Note that $p$ is defined deterministically, independent of any
samples. It is easy to see that the overestimate \req{*} implies one
of the following two threshold sampling events:
\begin{description}
\item[(A)] The number of elements from $X$ that hash below $p$ is less
than $k$.  We expected $pn=k/(1-a)$ elements, so $k$ is a factor $(1-a)$ below the
expectation.
\item[(B)] $Y$ gets more than $(1+b)p|Y|$ hashes below $p$, that is,
a factor $(1+b)$ above the expectation.
\end{description}
To see this, assume that both (A) and (B) are false. When (A) is
false, we have $k$ hashes from $X$ below $p$, so the largest hash
in $S$ is below $p$. Now if (B) is also false, we have at most
$(1+b)p|Y|=(1+b)/(1-a)\cdot fk$ elements from $Y$ hashing below $p$, 
and only these elements from $Y$ could be in 
$S$. This contradicts \req{*}. By the union bound, we have proved
\begin{proposition}\label{union} The probability of the overestimate \req{*}
is bounded by $P_A+P_B$ where $P_A$ and $P_B$ are the probabilities
of the events (A) and (B), respectively.
\end{proposition}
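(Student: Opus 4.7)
The plan is to prove the claim by containment: I will show that the overestimate event \req{*} is contained in the union of events (A) and (B), so that a single application of the union bound yields $\Pr[\req{*}]\leq P_A+P_B$. The natural route is the contrapositive: assume both (A) and (B) fail, and derive that \req{*} fails as well.

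Concretely, suppose (A) fails, so at least $k$ elements of $X$ have hash value below the deterministic threshold $p = k/(n(1-a))$. Then the $k$-th smallest hash in $X$ is at most $p$, and so every element of $S$ has hash value at most $p$. In particular, $|Y\cap S|$ is bounded above by the number of elements of $Y$ whose hash lies below $p$. Now suppose (B) also fails, so this latter count is at most $(1+b)\,p\,|Y|$. Since $p\,|Y| = \frac{k|Y|}{n(1-a)} = \frac{fk}{1-a}$, combining the two gives
\[
  |Y\cap S|\;\leq\;(1+b)\,p\,|Y|\;=\;\frac{1+b}{1-a}\,fk,
\]
which contradicts \req{*}. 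Hence \req{*} implies (A) or (B), and the union bound yields the stated inequality.

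The one point to be careful about is that $p$ is defined deterministically, independent of the random hashes. This is what makes (A) and (B) genuine threshold-sampling events (decoupled from the random order statistic $h_{(k)}$) whose probabilities $P_A$ and $P_B$ are well-defined and can later be estimated by standard concentration inequalities for sums of $d$-independent $0/1$ variables. Beyond this bookkeeping there is no real obstacle: the entire content of the proposition is the observation that a failure of (A) caps $h_{(k)}$ at the deterministic value $p$, thereby reducing the analysis of the random bottom-$k$ quantity $|Y\cap S|$ to a fixed-threshold count of hashes of $Y$ below $p$, which is exactly the quantity controlled by (B).
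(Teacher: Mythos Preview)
Your proof is correct and essentially identical to the paper's own argument: both proceed by the contrapositive, showing that if (A) and (B) are both false then $S$ lies entirely among hashes below $p$ and hence $|Y\cap S|\leq(1+b)p|Y|=\frac{1+b}{1-a}fk$, contradicting \req{*}. The additional commentary you give on why $p$ being deterministic matters is accurate and mirrors the paper's remark preceding the proposition.
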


\paragraph{Upper bound with 2-independence}\label{2-ind}
Addressing events like (A) and (B), let $m$ be the number of elements in
the set $Z$ considered, e.g., $Z=X$ or $Z=Y$.  We study the number of
elements hashing below a given threshold $p\in[0,1]$.  Assuming that the
hash values are uniform in $[0,1]$, the mean is $\mu=mp$. Assuming
2-independence of the hash values, the variance is $mp(1-p)=(1-p)\mu$ and
the standard deviation is $\sigma=\sqrt{(1-p)\mu}$. By Chebyshev's
inequality, we know that the probability of a deviation by $r\sigma$
is bounded by $1/r^2$.  Below we will only use that the relative standard
deviation $\sigma$ bounded by $1/\sqrt\mu$.

For any given $r\leq \sqrt k/3$, we will fix $a$ and $b$ to give a combined
error probability of $2/r^2$. More precisely, we will fix
$a=r/\sqrt k$ and $b=r/\sqrt{fk}$. This also fixes 
$p=k/(n(1-a))$. We note for later that $a\leq 1/3$ and $a\leq b$. This
implies 
\begin{equation}\label{eq:ab}
(1+b)/(1-a)\leq (1+3b)=1+3r/\sqrt{fk}.
\end{equation}
In connection with (A) we study the number of elements from $X$ hashing 
below $p$. The mean is $pn\geq k$ so the relative standard deviation is
less than $1/\sqrt k$. It follows that a relative error of $a=r/\sqrt k$
corresponds to at least $r$ standard deviations, so 
\[P_A=\Prp{\#\{x\in X|h(x)<p\}<(1-a)np}<1/r^2.\]
In connection with (B) we study the number of elements from $Y$ hashing 
below $p$. Let $m=|Y|$. The mean is $pm=km/(n(1-a))$ and the relative standard deviation less than 
$1/\sqrt{pm}<1/\sqrt{km/n}$. 
It follows than a relative error of $b=r/\sqrt{km/n}$
is more than $r$ standard deviations, so 
\[P_B=\Prp{\#\{y\in Y|h(y)<p\}>(1+b)mp}<1/r^2.\]
By Proposition \ref{union} we conclude that the probability of 
\req{*} is bounded by $2/r^2$. Rewriting \req{*} with \req{eq:ab},
we conclude that 
\begin{equation}\label{eq:const-bounds}
\Prp{|Y \cap S| > fk+3r\sqrt{fk}} \leq  2/r^2.
\end{equation}
This bounds the probability of the positive error in \req{eq:2-ind}.
The above constants $3$ and $2$ are moderate, and they can easily be
improved if we look at asymptotics. Suppose we want good
estimates for subsets $Y$ of frequency at least $f_{\min}$, that is,
$|Y|\geq f_{\min}|X|$. This time, we set $a=r/\sqrt{f k}$, and then we get
$P_A\leq f/r^2$. We also set $b=r/\sqrt{fk}$ preserving $P_B\leq
1/r^2$. Now for any $Y\subseteq X$ with $|Y|>fn$, we have
\begin{align}\label{eq:f}
&\Prp{|Y \cap S| > (1+\eps)fk} = (1+f)/r^2\\
&\textnormal{ where }\eps=
\frac{1+r/\sqrt{fk}}{1-r/\sqrt k}-1=\frac{r/\sqrt k+r/\sqrt{fk}}{1-r/\sqrt k}.
\nonumber
\end{align}
With $f=o(1)$ and $k=\omega(1)$, the error is 
$\eps=(1+o(1))r/\sqrt{fk}$, and the error probability is 
$P_\eps=(1+f)/r^2=(1+o(1))/r^2$. Conversely, this means that
if we for subsets of frequency $f$ and a relative positive 
error $\eps$ want an error probability 
around $P_\eps$, then we set $r=\sqrt{1/P_\eps}$ and $k=r^2/(f\eps^2)=
1/(f\,P_\eps\,\eps^2)$.

\subsection{A union lower bound}\label{sec:lower}
We have symmetric bounds for underestimates:
\begin{equation}\label{*-}
   |Y \cap{} S|< \frac{1-b'}{1+a'}\,fk.
\end{equation}
This time we define the threshold probability $p'=\frac k{n(1+a')}$. It is easy to see that
the overestimate \req{*} implies one of the following two events:
\begin{description}
\item[(A$'$)] The number of elements from $X$ below $p'$ is at least
$k$.  We expected $p'n=k/(1+a')$ elements, so $k$ is a factor $(1+a')$ above the
expectation.
\item[(B$'$)] $Y$ gets less than $(1-b')p|Y|$ hashes below $p'$, that is,
a factor $(1-b')$ below the expectation.
\end{description}
To see this, assume that both (A$'$) and (B$'$) are false. When (A$'$) is
false, we have less than $k$ hashes from $X$ below $p'$, so $S$ must contain
all hashes below $p'$. Now if (B) is also false, we have at least
$(1-b)p'|Y|=(1-b)/(1+a)\cdot fk$ elements from $Y\subseteq X$ 
hashing below $p'$, hence which must be in $S$. This
contradicts \req{*-}. By the union bound, we have proved
\begin{proposition}\label{union-} The probability of the underestimate \req{*-}
is bounded by $P_{A'}+P_{B'}$ where $P_{A'}$ and $P_{B'}$ are the probabilities of the events (A$'$) and (B$'$), respectively.
\end{proposition}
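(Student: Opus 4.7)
The plan is to mirror the proof of Proposition \ref{union} exactly, swapping the overestimate threshold $p=k/(n(1-a))$ for the underestimate threshold $p'=k/(n(1+a'))$ and reversing the direction of each inequality at the decisive step. Concretely, I would establish the contrapositive: if neither $(A')$ nor $(B')$ occurs, then $|Y\cap S|\ge \frac{1-b'}{1+a'}fk$, so \req{*-} fails. The union bound on the complement then immediately gives $\Pr[\req{*-}]\le P_{A'}+P_{B'}$.

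The key structural observation that distinguishes this from the upper bound is an inclusion property of bottom-$k$ sampling. When $(A')$ fails, fewer than $k$ elements of $X$ hash below $p'$. Since $S$ consists of the $k$ elements of $X$ with the smallest hashes, \emph{every} element of $X$ whose hash lies below $p'$ is forced into $S$; in particular, every element of $Y\subseteq X$ with hash below $p'$ lies in $Y\cap S$. Thus $|Y\cap S|$ is bounded below by the number of elements of $Y$ hashing below $p'$.

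When $(B')$ also fails, that number is at least $(1-b')p'|Y|$. A short arithmetic check gives $p'|Y| = \frac{k|Y|}{n(1+a')} = \frac{fk}{1+a'}$, so $|Y\cap S|\ge \frac{1-b'}{1+a'}fk$, contradicting \req{*-}. This closes the contrapositive and the union bound delivers the proposition.

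The only subtle point, and in fact the sole place where the bottom-$k$ mechanism enters nontrivially, is the inclusion property used when $(A')$ fails: having too few elements below $p'$ forces the sample to contain all of them. The analogous step for Proposition \ref{union} required the opposite direction, namely that having enough elements below $p$ forces the $k$th smallest hash to fall below $p$. Once this asymmetry is correctly stated, the rest is routine bookkeeping, and I do not anticipate any genuine obstacle.
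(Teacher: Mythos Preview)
Your proposal is correct and follows essentially the same argument as the paper: assume both $(A')$ and $(B')$ fail, use the failure of $(A')$ to conclude that $S$ contains every element hashing below $p'$, then use the failure of $(B')$ to lower-bound the number of $Y$-elements among those, yielding $|Y\cap S|\ge \frac{1-b'}{1+a'}fk$ and contradicting \req{*-}. Your identification of the ``inclusion property'' as the one nontrivial step matches the paper's reasoning exactly.
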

\paragraph{Lower bound with 2-independence}\label{2-ind-}
Using Proposition \ref{union-} we will bound the probability of
underestimates, complementing our previous probability bounds for
overestimates from Section \ref{2-ind}.
We will provide bounds for the same overall relative error as we did for the
overestimates; namely
\[\eps=\frac{1+b}{1-a}-1=(a+b)/(1-a)\]
However, for the events (A$'$) and (B$'$) we are going to scale up the
relative errors by a factor $(1+a)$, that is, we will use $a'=a(1+a)$
and $b'=b(1+a)$. The overall relative negative error from \req{*-} is then
\begin{align*}
\eps'&=1-\frac{1-b'}{1+a'}=(a'+b')/(1+a')\\
&<(1+a)(a+b)/(1+a')<(a+b)<\eps.
\end{align*}
Even with this smaller error, we will get better probability bounds than those
we obtained for the overestimates. For (A) we used $1/\sqrt k$ as an
upper bound on the relative standard deviation, so a relative error of
$a$ was counted as $s_A=a\sqrt k$ standard deviations.  In
(A$'$) we have mean $\mu'=np'=k/(1+a')$, so the relative standard
deviation is bounded by $1/\sqrt{k/(1+a')}=\sqrt{1+a+a^2}/\sqrt k$.
This means that for (A$'$), we can count a relative error of $a'=a(1+a)$ as
\begin{align*}
s'_A&=a(1+a)\sqrt k/\sqrt{1+a+a^2}\\&=s_A(1+a)/\sqrt{1+a+a^2}>s_A
\end{align*}
standard deviations. In Section \ref{2-ind} we bounded $P_A$ by $1/s_A^2$,
and now we can bound $P_{A'}$ by $1/{s'_A}^2\leq 1/s_A^2$.  
The scaling has the same positive effect on our probability
bounds for (B$'$). That is, in Section \ref{2-ind}, a relative error
of $b$ was counted as $s_B=b\sqrt{fk}$ standard deviations. With (B$'$) our
relative error of $b'=b(1+a)$ is counted as
\begin{align*}
s'_B&=b(1+a)\sqrt {fk}/\sqrt{1+a+a^2}\\
&=s_B(1+a)/\sqrt{1+a+a^2}>s_B
\end{align*}
standard deviations, and then we can bound $P_{B'}$ by $1/{s'_B}^2\leq 1/s_B^2$.
Summing up, our negative relative error $\eps'$ is smaller than our
previous positive error $\eps$, and our overall negative error probability 
bound $1/{s'_A}^2+1/{s'_B}^2$ is smaller than our previous positive error
probability bound $1/{s_A}^2+1/{s_B}^2$. We therefore translate
\req{eq:const-bounds} to
\begin{equation}\label{eq:const-bounds-}
\Prp{|Y \cap S| < fk-3r\sqrt{fk}} \leq 2/r^2.
\end{equation}
which together with \req{eq:const-bounds} establishes \req{eq:2-ind}.
Likewise \req{eq:f} translates to 
\begin{align}\label{eq:f*}
&\Prp{||Y \cap S| -fk |> \eps fk)} 
\leq 2(1+f)/r^2\\
&\textnormal{ where }\eps=
\frac{1+r/\sqrt{fk}}{1-r/\sqrt k}-1.\nonumber
\end{align}
As for the positive error bounds we note that with $f=o(1)$ and $k=\omega(1)$, the error is $\eps=(1+o(1))r/\sqrt{fk}$ and the error probability is 
$P_\eps=(2+o(1))/r^2$. Conversely, this means that
if we for a target relative error $\eps$ want an error probability 
around $P_\eps$, then we set $r=\sqrt{2/P_\eps}$ and $k=r^2/(f\eps^2)=
2/(f\,P_\eps\,\eps^2)$.

\subsection{Rare subsets}
We now consider the case where the expected number $fk$ of samples from $Y$ is
less than $1/4$. We wish to prove \req{eq:small-fk}
\begin{equation*}
\Pr[|Y\cap S|\geq\ell]=O(fk/\ell^2+\sqrt f /\ell).
\end{equation*}
For some balancing parameter $c\geq 2$, we use the threshold probability 
$p=ck/n$. The error event (A) is that less than $k$ elements from $X$
sample below $p$. The error event (B) is that at least $\ell$ elements
hash below $p$. As in Proposition \ref{union}, we observe that
$\ell$ bottom-$k$ samples from $Y$ implies (A) or (B), hence that
$\Pr[|Y\cap S|\geq\ell]\leq P_A+P_B$.

The expected number of elements
from $X$ that hash below $p$ is $ck$. The error event 
(A) is that we get less than $k$, which is less
than half the expectation. This amounts to at least $\sqrt{ck}/2$
standard deviations, so by Chebyshev's inequality, the probability
of (A) is $P_A\leq 1/(\sqrt{ck}/2)^2=4/(ck)$. 

The event (B) is that at least $\ell$ elements from $Y$
hash below $p$, while the expectation is only $fck$. Assuming that
$\ell\geq 2fck$, the error is by at
least $(\ell/2)/\sqrt{fck}$ standard deviations. By Chebyshev's inequality, the probability
of (B) is $P_B\leq 1/((\ell/2)/\sqrt{fck})^2=4fck/\ell^2$.
Thus
\[P_A+P_B\leq 4/(ck)+4fck/\ell^2.\]
We wish to pick $c$ for balance, that is, 
\[4/ck=4fck/\ell^2\iff c=\ell/(\sqrt f k)\]
However, we have assumed that $c\geq 2$ and that $\ell\geq 2fck$.
The latter is satisfied because $2fck=2fk \ell/(\sqrt f k)=2\sqrt f\ell$
and $f\leq 1/4$. Assuming that $c=\ell/(\sqrt f k)\geq 2$, we get 
\[P_A+P_B\leq 8/(k(\ell/(\sqrt f k)))=8\sqrt f/\ell.\]
When $\ell/(\sqrt f k)<2$, we set $c=2$. Then
\[P_A+P_B\leq 2/k+8fk/\ell^2\leq 16fk/\ell^2.\]
Again we need to verify that
$\ell\geq fck=2fk$, but that follows because $\ell\geq 1$ and $fk\leq 1/4$.
We know that at one of the above two cases applies, so we 
conclude that
\[P[|Y\cap S|\geq\ell]\leq P_A+P_B=O(fk/\ell^2+\sqrt f /\ell)\textnormal,\]
completing the proof of \req{eq:small-fk}.

\section{Priority sampling}
We now consider the more general situation where we are dealing with
a set $I$  of weighted items with $w_i$ denoting the weight of item $i\in I$.
Let $\sum I=\sum_{i\in I}w_i$ denote the total weight of set 
$I$. 

Now that we are dealing with weighted items, we will use {\em priority
  sampling\/} \cite{DLT07:priority} which generalizes the bottom-$k$
samples we used for unweighted elements. Each item or element $i$ is identified by a unique key
which is hashed uniformly to a random number $\hash_i\in (0,1)$.  The
item is assigned a {\em priority\/} $q_i=w_i/\hash_i>w_i$.  In
practice, hash values may have some limited precision $b$, but we assume that $b$ is large enough that the resulting
rounding can be ignored. We assume that all priorities end up distinct
and different from the weights. If not, we could break ties based on
an ordering of the items.  The priority sample $S$ of size $k$
contains the $k$ samples of highest priority, but it also stores a
threshold $\tau$ which is the $(k+1)$th highest priority. Based on
this we assign a weight estimate $\widehat w_i$ to each item $i$. If
$i$ is not sampled, $\widehat w_i=0$; otherwise $\widehat
w_i=\max\{w_i,\tau\}$. A basic result from \cite{DLT07:priority} is
that $\E[\widehat w_i]=w_i$ if the hash function is truly random (in
\cite{DLT07:priority}, the $h_i$ were described as random numbers, but
here they are hashes of the keys).

We note that priority sampling generalize the bottom-$k$ sample we
used for unweighted items, for if all weights are unit, then the $k$
highest priorities correspond to the $k$ smallest hash values. In
fact, priority sampling predates \cite{CK07}, and \cite{CK07}
describes bottom-$k$ samples for weighted items as a generalization
of priority sampling, picking the first $k$ items according to an
arbitrary randomized function of the weights.

The original objective of priority sampling \cite{DLT07:priority} was
subset sum estimation.  A subset $J\subseteq I$ of the items is
selected, and we estimate the total weight in the subset as $\widehat
w_J=\sum\{\widehat w_i|i\in J\cap S\}$. By linearity of expectation,
this is an unbiased estimator.  A cool application from
\cite{DLT07:priority} was that as soon as the signature of the Slammer
worm \cite{Slammer} was identified, we could inspect the priority samples from
the past to track its history and identify infected hosts. An
important point is that the Slammer worm was not known when the
samples were made. Samples are made with no knowledge about which subsets will later turn out to be of interest.

Trivially, if we want to estimate the relative subset weight $\sum
J/\sum I$ and we do not know the exact total, we can divide $\widehat w_J$
with the estimate $\widehat w_I$ of the total. As with the bottom-$k$
sampling for unweighted items, we can easily use priority sampling to estimate
the similarity of sets of weighted items: given the priority sample from two sets, we construct the priority sample of their union, and estimate the
intersection as a subset. 
This is where it is important that we use a hash
function so that the sampling from different sets is coordinated,
e.g., we could not use iterative sampling procedures like the one in
\cite{CDKLT11:varopt}. In the case of histogram similarity, it is natural to
allow the same item to have different weights in different sets.  More
specifically, allowing zero weights, every possible item has a weight
in each set.  For the similarity we take the sum of the minimum weight
for each item, and divide it by the sum of the maximum weight for each
item.  This requires a special sampling that we shall return to in Section \ref{sec:histogram}.

Priority sampling is not only extremely easy to implement on-line with
a standard min-priority queue; it also has some powerful universal
properties in its adaption to the concrete input weights. As conjectured in 
\cite{DLT07:priority} and proved in
\cite{Sze06}, given one extra sample, priority sampling has smaller
variance sum $\sum_i \Var[\widehat w_i]$ than {\em any\/} off-line
unbiased sampling scheme tailored for the concrete input weights. In
particular, priority sampling benefits strongly if there are dominant
weights $w_i$ in the input, estimated precisely as $\widehat
w_i=\max\{w_i,\tau\}=w_i$. 
In the important case of heavy tailed input
distributions \cite{AFT98}, we thus expect most of the total weight to
be estimated without any error.  The quality of a priority sample is
therefore often much better than what can be described in terms of
simple parameters such as total weight, number of items, etc. The
experiments in \cite{DLT07:priority} on real and synthetic data show
how priority sampling often gains orders of magnitude in estimate
quality over competing methods.

The quality of a priority estimate depends completely on the distribution of weights in the input, and often we would like to know how much we can trust a given estimate.
What we really want from a sample is not just an estimate of a subset sum, 
but a confidence interval \cite{Tho06:conf}: from the information in the sample, we want to derive lower and upper bounds that capture the true
value with some desired probability. Some applications of such concervative
bounds are given in \cite{DLT01}.

What makes priority sampling tricky to analyze is that the priority
threshold $\tau$ is a random variable depending on all the random
priorities.  It may be very likely that the threshold $\tau$ ends up
smaller than some dominant weight $w_i$, but it could also be bigger, so
we do have variance on all weight estimates $\widehat w_i$.

All current analysis of priority sampling
\cite{DLT07:priority,Sze06,Tho06:conf} is heavily based on true
randomness, assuming that the priorities are independent random
variables, e.g., the unbiasedness proof from \cite{DLT07:priority}
that $\E[\widehat w_i]=w_i$ starts by fixing the priorities $q_j$ of
all the other items $j\neq i$. However, in this paper, we want to use
hash functions with independence as low as 2, and then any such
analysis breaks down. In fact, bias may now be introduced. To see this, consider the following extreme case of
2-independent hashing of $n$ keys: divide $(0,1]$ into $n$
  subintervals $I_i=(i/n,(i+1)/n]$. With probability $1-1/n$, the keys
are all mapped to different random subintervals, and with probability $1/n$, all
  keys are mapped to the same random subinterval. Within the
  subintervals, the hashing is totally random. This scheme is clearly
  2-independent, but highly restricted for $n>2$. As a simple example
  of bias, consider a priority sample of $k=2$ out of $n=3$ unit
  weight keys. A messy computer calculation shows that the expected
  weight estimates are 1.084.

\paragraph{Relation to threshold sampling}
Generalizing the pattern for unweighted sets, our basic goal is to
relate the error probabilities with priority sampling to the much
simpler case of threshold sampling for weighted items.
In threshold sampling, we are not given a predefined
sample size. Instead we are given a fixed threshold $t$.  We use
exactly the same random priorities as in priority sampling, but now an item is
sampled if and only if $q_i>t$. The weight estimate is
\begin{equation}\label{eq:threshold-estimate}
\widehat w_i^{\,t}=\left\{\begin{array}{ll}
0 & \textnormal{if }q_i\leq t\\
\max\{w_i,t\} & \textnormal{if }q_i>t
\end{array}\right.\end{equation}
In statistics, threshold sampling is known as  Poisson sampling with probability proportional to size 
\cite{SSW92}. The name threshold sampling is taken from \cite{DLT05}. 

The $\widehat w_i^{\,t}$ notation from \req{eq:threshold-estimate} is well-defined
also when $t$ is a variable, and if priority sampling leads to threshold $\tau$, then
the priority estimate for item $i$ is $\widehat w_i=\widehat w_i^\tau$.

With a fixed threshold $t$, it is trivial to see that the estimates
are unbiased, that is, $\E[\widehat w_i^{\,t}]=w_i$; for if $w_i\geq
t$, we always have $\widehat w_i^{\,t}=w_i$, and if $w_i<t$, then
\[\E[\widehat w_i^{\,t}]=t\Pr[q_i>t]=t \Pr[h_i<w_i/t]=w_i.\]
The unbiasedness with fixed threshold $t$ only requires that each
$h_i$ is uniform in $(0,1)$. No independence is required. This contrasts the bias we may get with the variable priority threshold $\tau$ with limited dependence.

With threshold sampling, concentration bounds for subset sum estimates are easily derived. For a subset $J\subseteq I$, the threshold estimate
$\sum_{i\in J} \widehat w_i$
is naturally divided in an exact part for large weights and a variable
part for small weights:
\begin{eqnarray}
\sum_{i\in J, w_i\geq t} \widehat w_i^{\,t}&=&\sum_{i\in J, w_i\geq t} w_i\nonumber\\
\sum_{i\in J, w_i<t} \widehat w_i^{\,t}&=&t\sum_{i\in J, w_i< t} X_i
\textnormal{, where }X_i=[h_i<w_i/t]\in \{0,1\}.\label{eq:subsetvar}
\end{eqnarray}
Each $X_i$ depends on $h_i$ only, so if the 
$h_i$ are $d$-independent, then so are the $h_i$. Let $X=\sum_{i\in J, w_i< t} X_i$ and
$\mu=\E[X]=\sum_{i\in J, w_i<t} w_i/t$. 
As in the unweighted case, if the hash
function is 2-independent, then $\Var[X]<\mu$, and by Chebyshev's inequality
$\Pr[|X-\mu|\geq r\sqrt\mu]\leq 1/r^2$.

Informally speaking, for bounded errors and modulo constant factors,
our main result is that concentration bounds for threshold sampling
apply to priority sampling as if the variable priority threshold was
fixed. As in the unweighted case, the result is obtained by a union
bound over threshold sampling events. In the unweighted case, we only
needed to consider the four threshold sampling error events (A), (B),
(A'), and (B'). However, now with weighted items, we are going to
reduce a priority sampling error event to the union of an unbounded
number of threshold sampling error events that happen with
geometrically decreasing probabilities.

\subsection{Notation and definitions}
Before formally presenting our priority sampling results, we introduce some notation and definitions.

\paragraph{Fractional subsets and inner products}
It is both convenient and natural to generalize our estimates from
regular subsets to {\em fractional subsets}, where for each $i\in I$,
there is a fraction $f_i\in[0,1]$ specifying that item $i$ contributes
$f_iw_i$ to the weight of fractional subset. A regular subset
corresponds to the special case where $f_i\in \{0,1\}$.

We are now interested in inner products between the fraction vector $f=(f_i)_{i\in I}$
and the vectors of weights or weight estimates. Our goal is to estimate
$fw=\sum_{i\in I} f_i w_i$.  With
threshold $t$, we estimate $fw$ as $f\widehat w^{\,t}=\sum_{i\in I} f_i
\widehat w^{\,t}_i=\sum_{i\in S} f_i \widehat w^{\,t}_i$.  With fixed
threshold $t$, we have $\E[\widehat w_i^{\,t}]=w_i$, so $\E[f_i\widehat
  w_i^{\,t}]=f_i w_i$ and $\E[f\widehat w^{\,t}]=fw$.

As an example, suppose we sampled grocery bills. For each bill sampled, we could check the fraction spent on candy, and based on that estimate the total amount spent on candy.

To emulate a standard subset $J$, we let $f$ be the characteristic
function of $J$, that is, $f_i=1$ if $i\in J$; otherwise $f_i=0$. In
fact, we will often identify a set with its characteristic vector,
so the weight of $J$ can be written as $Jw$ and estimated as $J\widehat
w^{\,t}$.

Using inner products
will simplify a lot of notation in our analysis.  The generalization
to fractional subsets comes for free in our analysis which is all
based on concentration bounds for sums of random variables
$X_i\in[0,1]$.

\paragraph{Notation for small and sampled weights}
With threshold $t$, we know that variability in the estimates is from items
$i$ with weight below $t$. We will generally use a subscript $_{<t}$
to denote the restriction to items $i$ with weights $w_i<t$,
e.g., $I_{<t}=\{i\in I | w_i<t\}$,
$w_{<t}=(w_i)_{i\in I_{<t}}$, and $fw_{<t}=\sum_{i\in I_{<t}} f_iw_i$. Notice that $fw_{<t}$
does not include $i$ with $w_i\geq t$ even if $f_iw_i<t$.

Above we defined $w_{<t}$ to denote the vector $(w_i)_{i\in I_{<t}}$ of
weights below $t$, and used it for the inner product
$fw_{<t}=\sum_{i\in I_{<t}} f_iw_i$. When it is clear from the context
that we need a number, not a vector, we will use $w_{<t}$ to denote
the sum of these weights, that is, $w_{<t}=\sum_{i\in I_{<t}}
w_i=\uw{<t}$ where $\ul{1}$ is the all 1s vector. Since $f_i\leq 1$ for all $i$, we always have $\fw{<t}\leq w_{<t}$.

We shall use subscript $_{\leq t}$, $_{\geq t}$, and $_{>t}$  
to denote the corresponding restriction to items with weight $\leq t$,
$\geq t$, and $>t$, respectively.

We also introduce a superscript $^t$ notation to denote the restriction to 
items sampled with threshold $t$, that is, items $i$ with $q_i>t$, so
$I^{\,t}_{<t}$ denotes the set of items with weights below $t$ that ended up
sampled. Identifying this set with its characteristic vector, we can
write our estimate with threshold $t$ as
\begin{equation}\label{eq:sample-vector}
f\widehat w^{\,t}=\fw{\geq t}+t(fI_{<t}^{\,t}).
\end{equation}

\paragraph{Error probability functions}
As mentioned earlier, we will reduce the priority sampling error event to the union
of an unbounded number of threshold sampling error events that happen
with geometrically decreasing probabilities. Our reduction will
hold for most hash functions, including 2-independent hash functions,
but to make such a claim clear, we have to carefully describe what
properties of the hash functions we rely on.

Assume that the threshold $t$ is fixed.
With reference to \req{eq:sample-vector}, the variability in
our estimate is all from 
\[fI_{<t}^{\,t}=\sum_{i\in I_{< t}} X_i
\textnormal{, where }X_i=f_i[h_i<w_i/t]\in [0,1].\]
As for the regular subsets in \req{eq:subsetvar}, let $X=\sum_{i\in I_{<t}} X_i$
and $\mu=\E[X]$. We are interested in an error probability
function $\wp$
such that for $\mu>0$, $\delta>0$, if $\mu=\E[X]$, then
\begin{equation}\label{eq:p}
\Pr[|X-\mu|>\delta\mu]\leq \wp(\mu,\delta).
\end{equation}
The error probability function $\wp$ that we can use depends on the quality of the hash function. For example, if the hash function is 2-independent, then 
$\Var[X]\leq\mu$, and then by Chebyshev's inequality, we can use
\begin{equation}\label{eq:chebyshev}
\wp^\textnormal{Chebyshev}(\mu,\delta)=1/(\delta^2\mu).
\end{equation}
In the case of full randomness, for $\delta\leq 1$, we could use
a standard 2-sided Chernoff bound (see, e.g., \cite{motwani95book})
\begin{equation}\label{eq:chernoff}
\wp^{\textnormal{Chernoff}_{\delta\leq 1}}(\mu,\delta)=2\exp(-\delta^2\mu/3).
\end{equation}
For most of our results, it is more natural to think of $\delta$ as
a function of $\mu$ and some target error probability $P\in (0,1)$, defining
 $\delta(\mu,P)$ such that
\begin{equation}\label{def:delta}
\mu(\mu,\delta(\mu,P))=P.
\end{equation}
Returning to threshold sampling with threshold $t$, by \req{eq:sample-vector} the error is $f\widehat w^{\,t}-fw=t(fI_{<t}^{\,t})-\fw{<t}$.
Hence
\begin{equation}\label{eq:threshold}
\Pr[|f\widehat w^{\,t}-fw|>\delta(\fw{<t}/t,P)\fw{<t}]\leq P.
\end{equation}
When we start analyzing priority sampling, we will need to relate the
probabilities of different threshold sampling events. This places some
constraints on the error probability function
$\wp$. Mathematically, it is convenient to allow $\wp$ to attain values
above $1$, but only values below $1$ are probabilistically interesting.
\begin{definition}\label{def:p} An error probability function 
$\wp:\mathbb R_{\geq 0}\times R_{\geq 0}\rightarrow R_{\geq 0}$ is {\em well-behaved\/} if
\begin{itemize}
\item[(a)] $\wp$ is continuous and strictly decreasing in both arguments. 
\item[(b)] If with the same absolute error we  decrease the expectancy, then
the probability goes down. Formally if $\mu'<\mu$ and 
$\mu'\delta'\geq \mu\delta$, then   $\wp(\mu',\delta')<\wp(\mu,\delta)$.
\end{itemize}
We also have an optional condition for cases where we only care for $\delta\leq 1$
as in \req{eq:chernoff}
\begin{itemize}
\item[(c)] If $\delta\leq 1$ and $\wp(\mu,\delta)<P_{\wp}^{(c)}$ for some
  constant $P_{\wp}^{(c)}$ depending on $\wp$, then $\wp(\mu,\delta)$ falls
  at least  proportionally to $\mu\delta^2$. Formally, if
  $\delta_0,\delta_1\leq 1$, $\wp(\mu_0,\delta_0)<1$, and
  $\mu_0\delta_0^2<\mu_1\delta_1^2$, then
\begin{equation}\label{eq:geometric}
\wp(\mu_0,\delta_0)\geq \frac{\mu_0\delta_0^2}{\mu_1\delta_1^2}\,
\wp(\mu_1,\delta_1).
\end{equation}
\end{itemize}
\end{definition}
We will use condition (c) to argue that
probabilities of different events fall geometrically. 
The condition is trivially satisfied with our 2-independent Chebyshev bound
\req{eq:chebyshev}, so we can just
set $P_{\wp^{\textnormal{Chebyshev}}}^{(c)}=1$.
The restrictions
in (c) are necessary for the Chernoff
bound \req{eq:chernoff}, 
$\wp^{\textnormal{Chernoff}_{\delta\leq 1}}=2\exp(-\delta^2\mu/3)$, which only falls fast enough for $\delta^2\mu/3\geq 1$, hence with
$\wp^{\textnormal{Chernoff}_{\delta\leq 1}}(\mu,\delta)\leq P_{\wp^{\textnormal{Chernoff}_{\delta\leq 1}}}^{(c)}=2/e$. As a further illustration, with
4-independence, we have the 
4th moment bound $\frac{\mu+3\mu^2}{(\delta\mu)^4}$
(see, e.g., \cite[Lemma 4.19]{KRS90}).
For $\mu\geq 1$, this is upper bounded by 
$\wp^{\textnormal{4th-moment}_{\mu\geq 1}}(\mu,\delta)=\left(\frac2{\mu\delta^2}\right)^2$. For
$\delta\leq 1$, the condition $\mu\geq 1$ is satisfied if
$\wp^{\textnormal{4th moment}_{\mu\geq 1}}(\mu,\delta)\leq 4$, so we can just
use $P_{\wp^{\textnormal{4th moment}}}^{(c)}=1$.

\paragraph{Threshold confidence intervals} In the case of threshold sampling with 
a fixed threshold $t$, we get some trivial confidence intervals
for the true value $fw$. 
The sample gives us the exact value $fw_{\geq t}$ for weights at least as  
big as $t$,
and an estimate $f\widehat w_{<t}^{\,t}$ for those below. 
Setting 
\begin{align*}
f\widehat w_{<t}^-&=\min\{x\,|\,(1+\delta(x/t,P))x\geq f\widehat w_{\geq t}^{\,t}\}\\
f\widehat w_{<t}^+&=\max\{x\,|\,(1-\delta(x/t,P))x\leq f\widehat w_{\geq t}^{\,t}\}
\end{align*}
we get 
\[\Pr\left[fw_{\geq t}+f\widehat w_{<t}^-\;\leq\; fw\;\leq \;fw_{\geq t}+f\widehat w_{<t}^+\right]\geq 1-2P.\]
We are going to show that similar bounds can be obtained for priority sampling.

\drop{
For a simpler to derive symmetric bound, 
if $\delta=\delta(f\widehat w_{<t}^{\,t}/t,P)\leq 1/\sqrt 2$,
then 
\[\Pr[fw_{<t}=(1\pm \sqrt 2\delta)f\widehat w_{<t}^{\,t}]< 1-2P.\]
This bound is tight in the upper bound. The point is
as follows. Let $\mu_1=2\mu_0$, $\delta_0=1/\sqrt 2$, and 
$\delta_1=1/2$. Then $\mu_1=(1+\sqrt 2 \delta_0)\mu_0$
and $\mu_0=(1-\delta_1)\mu_1$. Now $\mu_0\delta_0^2=\mu_1\delta_1^2$,
so by Definition \ref{def:p} (c), we have $\wp(\mu_0,\delta_0)=
\wp(\mu_1,\delta_1)$.}

\subsection{Priority sampling: the main result}\label{sec:main-results}
We are now ready to present our main technical result. We are
considering a random priority sample of size $k$, and let $\tau$ denote
the resulting priority threshold. The sample size $k$ and the target error probability $P$ are both fixed in advance of the random sampling.
\begin{theorem}\label{thm:main} Let the error probability function $\wp$ satisfy Definition \ref{def:p} including (c). 
With target error probability $P\leq P_\wp^{(c)}$, 
let
\[\delta=6\,\delta(f w_{<\tau}/(3\tau),P).\]
If $\delta\leq 2$, then 
\[\Pr[|f\widehat w^\tau-fw|>\delta fw_{<\tau}]\leq 6 P.\]
\end{theorem}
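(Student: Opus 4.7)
My plan is to generalize the bottom-$k$ union-bound strategy of Section~2 to weighted priority sampling. For bottom-$k$ it sufficed to introduce two auxiliary fixed thresholds $p$ and $p'$; for priority sampling I expect to need a whole geometric family, because the random priority threshold $\tau$ can fall anywhere over a wide range of values, and the threshold-sampling error bound~\req{eq:threshold} itself depends on the threshold.

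First I would pin $\tau$ into a window $[t^-,t^+]$ about some reference level $t^*$ defined so that the expected number of threshold samples at $t^*$ matches the target sample size $k$. Observe that $\tau > t^+$ means fewer than $k$ items have priority above $t^+$, and $\tau < t^-$ means more than $k$ items have priority above $t^-$; both are sums of indicators whose means are computed directly from the weights. Applying~\req{eq:threshold} to the all-ones subset at these two fixed thresholds, together with the monotonicity in Definition~\ref{def:p}(a), should give $\tau \notin [t^-,t^+]$ with probability at most $2P$. The right choice makes the window a constant multiplicative factor, and this is where the $3$ in $fw_{<\tau}/(3\tau)$ enters.

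Second, within the window I would lay down a geometric chain of fixed thresholds $t_0 < t_1 < \cdots$ of bounded ratio and apply~\req{eq:threshold} at each. By condition~(c) of Definition~\ref{def:p}, the probabilities $\wp(\mu_j,\delta(\mu_j,P))$ decay geometrically once one steps outside the central band of the chain, so the whole union sums to $O(P)$. Conditional on $\tau$ landing near some $t_j$ and all relevant threshold estimates being accurate, I would then transfer the bound from $f\widehat w^{t_j}$ to $f\widehat w^\tau$ via~\req{eq:sample-vector}, accounting for (i) items whose weights lie between $t_j$ and $\tau$, (ii) the multiplicative distortion from replacing $t_j$ by $\tau$ in front of the threshold sum, and (iii) the change from $fw_{<t_j}$ to $fw_{<\tau}$ inside the final bound. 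The factor $6$ in the conclusion should absorb all of these together with the constants from Steps~1 and~2.

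The main obstacle will be this last transfer. Weights vary, so items with $w_i$ close to $t$ can switch between the exact part $fw_{\geq t}$ and the random part as $t$ slides over a constant-factor band, and the random part is itself scaled by $t$. A careful accounting---probably aided by the fractional-subset formulation, which makes straddling items contribute smoothly---is needed to show that $f\widehat w^\tau$ inherits an error bound of the same shape as~\req{eq:threshold} up to the claimed constant $6$. I suspect the hypothesis $\delta\leq 2$ in the statement is there precisely to keep $fw_{<\tau}$ large enough to dominate the boundary perturbations and to keep the bookkeeping clean.
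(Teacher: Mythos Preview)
Your Step~1 contains a real gap: the window $[t^-,t^+]$ around the ideal threshold $t^*$ is \emph{not} in general a constant multiplicative factor. The paper explicitly warns against this (Section~\ref{sec:tau}): with $k-1$ weights equal to $t^*$ and many tiny weights summing to $t^*$, there is no useful lower bound on $\tau$, so the window can be arbitrarily wide. What the paper actually proves (Lemma~\ref{lem:all-t}) is the conditional statement that for $t\in[t^-,t^+)$ one has $t\geq(1-\delta(w_{\leq t}/t^+,P))\,t^+$; this only pins $t$ near $t^+$ when $w_{\leq t}$ is large, which is exactly what fails at the low end. Consequently your Step~2 plan---a finite geometric grid of thresholds of bounded ratio inside a bounded window---does not get off the ground, and your explanation for the constant $3$ is not the right one.

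The paper's decomposition is different in kind: it layers by \emph{weight mass}, not by threshold ratio. The thresholds $t_0=t^+>t_1>\cdots>t_{L+1}$ are chosen so that $fw_{<t_\ell}$ roughly halves at each step (equations~\req{eq:heavy-rest}--\req{eq:halving}). Then for layer~$\ell$ one does \emph{not} approximate $\tau$ by a nearby grid point; instead one bounds, uniformly over all $t\in[t_{\ell+1},t^+)$, the estimate error restricted to items with weights in $[t_{\ell+1},t_\ell)$. The crucial observation is that such items have $w_i\geq t_{\ell+1}\geq(1-\delta_\ell)t^+$, so overestimates in that layer are bounded \emph{deterministically} by a factor $1/(1-\delta_\ell)$, and only the loss from unsampled items is random; its expectation is at most $\delta_\ell fw_{[t_{\ell+1},t_\ell)}$, and condition~(c) then makes the layer error probabilities $P_\ell$ geometrically increasing with $P_L\leq P$. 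The last layer (all weights below $t_L$) is handled separately by two fixed-threshold events at $t_{L+1}$ and $t^+$. Your ``nearest-threshold transfer'' idea is essentially the technique behind Theorem~\ref{thm:tail}, and there it yields only the weaker bound with the extra $\ell^2$ penalty; getting the clean $6P$ of Theorem~\ref{thm:main} really does require the weight-mass layering.
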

The above constants are not optimized, but with $O$-notation, some of
our statements would be less clear. Ignoring the constants
and the restriction $\delta \leq 2$, we see that our error bound for priority
sampling with threshold $\tau$ is of the same type as the one in \req{eq:threshold}
for threshold sampling with fixed threshold $t=\tau$.

The proof of Theorem \ref{thm:main} is rather convoluted. We 
consider a single priority sampling event with $k$ samples and priority
threshold $\tau$.  It assigns a random priority $q_i$ to each item, and this
defines a sample for any given threshold $t$. In particular, $\tau\geq
t$ if and only if we get at least $k+1$ samples with threshold $t$.
Note that $\Pr[\tau=t]=0$ for any given $t$.
We define 
\begin{eqnarray}
t_{\max}&=&\min\{\,t\;|\;\Pr[\tau\geq t]\leq P\,\}\label{eq:tmax-def}\\
t_{\min}&=&\max\{\,t\;|\;\Pr[\tau\!< t]\leq P\,\}\label{eq:tmin-def}
\end{eqnarray}
By definition, $\Pr[\tau\not\in [t_{\min},t_{\max})]\leq 2P$. By union, to prove Theorem \ref{thm:main}, it suffices to show that the following good event 
errors with probability at most $4P$:
\begin{align}
\forall t\in [t_{\min},t_{\max}): \delta&=\delta(f w_{<t}/(3 t),P)\leq 1/3\nonumber\\
&\implies|f\widehat w^{\,t}-fw|> 6\delta f w_{<t}\label{eq:good-t0}
\end{align}
Note that our variable $\delta$ is $6$ times smaller than the one
in the statement of Theorem \ref{thm:main}. This parameter change
will be more
convenient for the analysis.
What makes \req{eq:good-t0} very tricky to prove is that $\delta(f w_{<t}/(3 t),P)$
can vary a lot for different $t\in [t_{\min},t_{\max})$. 

\paragraph{Priority confidence intervals} The format of Theorem  \ref{thm:main}
makes it easy to derive confidence intervals like those for
threshold sampling. A priority sample with priority threshold $\tau$
gives us the exact value $fw_{\geq \tau}$ for weights at least as  
big as $\tau$,
and an estimate $f\widehat w_{<\tau}^\tau$ for those below. 
For an upper bound on $fw_{<\tau}$, we compute
\[f\widehat w_{<\tau}^+=\max\{x\,|\,\delta=6\,\delta(x/(3\tau),P))\wedge
(1-\delta)x\leq 
f\widehat w_{\geq \tau}^\tau\}.\]
Note that here in the upper bound, we only consider $\delta\leq 1$,
so we do not need to worry about the restriction $\delta< 2$ in Theorem \ref{thm:main}. For a lower bound on $fw_{<\tau}$, we use
\[f\widehat w_{<\tau}^-=\min\{x\,|\,\delta=6\,\delta(x/(3\tau),P))\leq 2\wedge
(1+\delta)x\geq f\widehat w_{\geq \tau}^\tau\}.\]
Here in the lower bound, the restriction $\delta=6\,\delta(x/(3\tau),P))\leq 2$
prevents us from deriving  a lower bound $x=f\widehat w_{<\tau}^-\leq 
f\widehat w_{\geq \tau}^\tau/3$. In such cases, we use the
trivial lower bound $x=f\widehat w_{<\tau}^-=0$ which in distance from
$f\widehat w_{<\tau}^\tau$ is at most 1.5 times bigger. Now, by Theorem \ref{thm:main},
\[\Pr\left[fw_{\geq \tau}+f\widehat w_{<\tau}^-\;\leq\; fw\;\leq \;fw_{\geq \tau}+f\widehat w_{<\tau}^+\right]\geq 1-12 P.\]
In cases where the exact part $fw_{\geq \tau}$ of an estimate is small compared with
the variable part $f\widehat w_{\geq \tau}^\tau$, we may be
interested in a non-zero lower bound 
$f\widehat w_{<\tau}^-$ even if it is smaller than $f\widehat w_{\geq \tau}^\tau/3$. To do this, we need bounds for larger $\delta$.

\paragraph{Large errors}
We are now going to present bounds that works for arbitrarily large relative
errors $\delta$. We assume a basic 
error probability function $\wp$ satisfying Definition \ref{def:p} (a) and (b)
while (c) may not be satisfied. The bounds we get are not as clean as those from 
Theorem \ref{thm:main}. In particular, they involve
$t_{\max}$ from \req{eq:tmax-def}. Since we are only worried about errors
$\delta>1$, we only have to
worry about positive errors.

\begin{theorem}\label{thm:tail} Set
\begin{align*}
\ell&=\lg\lceil (t_{\max}/\tau)\rceil\\
\delta&=\delta(f w_{<\tau}/\tau,\,P/\ell^2).
\end{align*}
Then 
\[\Pr[f\widehat w^\tau_{<\tau}> (2+2\delta)fw_{<\tau}] < 3P.\]
\end{theorem}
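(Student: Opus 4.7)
The plan is to mimic the dyadic reduction from priority to threshold sampling used in Theorem~\ref{thm:main}, in a one-sided form that needs only properties (a) and (b) of Definition~\ref{def:p} and that can absorb arbitrarily large relative errors.

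First, the event $\tau>t_{\max}$ has probability at most $P$ by \req{eq:tmax-def}, so I condition on $\tau\leq t_{\max}$. I partition the remaining range dyadically: for each integer $j\geq 1$ let $E_j$ be the event $\tau\in[t_{\max}/2^j,\,t_{\max}/2^{j-1})$ and set $t^*_j=t_{\max}/2^j$. Then on $E_j$ I have $t^*_j\leq\tau<2t^*_j$ and $\ell=\lg\lceil t_{\max}/\tau\rceil\in(j-1,j]$, so $P/\ell^2\geq P/j^2$.

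At each level $j$ I invoke the one-sided threshold bound \req{eq:threshold} at the fixed threshold $t^*_j$ with target probability $P_j=P/j^2$: letting $\delta_j=\delta(fw_{<t^*_j}/t^*_j,P_j)$, the event $G_j$ that the threshold estimate overshoots by more than $\delta_j fw_{<t^*_j}$ has probability at most $P_j$. A union bound gives $\Pr[\bigcup_{j\geq 1}G_j]\leq\sum_{j\geq 1}P/j^2<2P$.

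Next, on $E_j\cap G_j^c$, I would reduce the priority estimate to the threshold estimate at $t^*_j$. Every priority-sampled item with $q_i>\tau$ satisfies $q_i>t^*_j$ (so it is threshold-sampled), and any such item with $w_i<\tau$ contributes at most $\tau<2t^*_j$ to the priority estimate. Splitting $f\widehat w^\tau_{<\tau}$ into items with $w_i<t^*_j$ (bounded by $2f\widehat w^{t^*_j}_{<t^*_j}$) and items with $t^*_j\leq w_i<\tau$ (always threshold-sampled at $t^*_j$, contributing at most $2(fw_{<\tau}-fw_{<t^*_j})$) yields the deterministic inequality
\[
f\widehat w^\tau_{<\tau}\ \leq\ 2fw_{<\tau}\,+\,2\delta_j fw_{<t^*_j}.
\]

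The main obstacle is passing from the absolute error $\delta_j fw_{<t^*_j}$ to $\delta fw_{<\tau}$. Writing $\mu=fw_{<\tau}/\tau$ and $\mu_j=fw_{<t^*_j}/t^*_j$, I always have $\mu_j/\mu\leq\tau/t^*_j<2$ because $fw_{<t^*_j}\leq fw_{<\tau}$. Combined with $P_\ell\geq P_j$, property (b) of well-behaved $\wp$ constrains the joint growth of $\mu$ and $\delta$: a case split on whether $\mu_j\gtreqless\mu$ (handling the hard case $\mu_j<\mu$ by the contrapositive of (b), and the case $\mu_j\geq\mu$ by applying (b) with a scaled second argument) is expected to give $\delta_j fw_{<t^*_j}\leq\delta fw_{<\tau}$, with the factor $\tau/t^*_j<2$ absorbed into the theorem's leading constants. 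Summing, the total failure probability is $P+2P=3P$, as claimed.
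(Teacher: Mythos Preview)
Your approach matches the paper's: the same dyadic partition of $(0,t_{\max})$ into levels $[t_{\max}/2^j,\,t_{\max}/2^{j-1})$, the same union bound over levels with target probabilities $P/j^2$ summing to less than $2P$, and the same deterministic reduction $f\widehat w^\tau_{<\tau}\leq 2fw_{<\tau}+2\delta_j fw_{<t^*_j}$ on the good event at level $j$.

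The one step you leave tentative (``is expected to give'') is precisely what the paper isolates as Lemma~\ref{lem:decr-t}: for any fixed target probability $Q$, the absolute error target $\delta(fw_{<t}/t,\,Q)\cdot fw_{<t}$ is monotone increasing in the threshold $t$. Since $t^*_j\leq\tau$ and (with the intended reading $\ell=\lceil\lg(t_{\max}/\tau)\rceil$) one has $\ell=j$, this gives $\delta_j fw_{<t^*_j}\leq\delta fw_{<\tau}$ outright, with no extra factor of~$2$ to absorb. Your case split is in fact a valid inline proof of that lemma, with one correction: for $\mu_j\geq\mu$ you want property~(a) (monotonicity in the first argument gives $\delta_j\leq\delta$, and $fw_{<t^*_j}\leq fw_{<\tau}$ finishes it), not~(b); for $\mu_j<\mu$ the contrapositive of~(b) with $\wp(\mu_j,\delta_j)=\wp(\mu,\delta)$ gives $\mu_j\delta_j<\mu\delta$, hence $\delta_j fw_{<t^*_j}<(t^*_j/\tau)\,\delta fw_{<\tau}\leq\delta fw_{<\tau}$. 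Also note your remark ``$P/\ell^2\geq P/j^2$'' points the wrong way if strict (larger target probability means \emph{smaller} $\delta$), but since $\ell=j$ it is an equality and harmless.
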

Complementing Theorem \ref{thm:main}, we only intend to use Theorem \ref{thm:tail} for large errors where $(2+2\delta)=O(\delta)$.  We wish
to provide a probabilistic lower bound 
for $fw_{<\tau}$. Unfortunately, we do not know $t_{\max}$ which
depends on the whole weight vector $(w_i)_{i\in I}$. However, 
based our priority sample, it is not
hard to generate a probabilistic upper bound $\ol t_{\max}$ on $t_{\max}$ such
that $\Pr[\ol t_{\max}<t_{\max}]\leq P$. We set
\begin{equation}\label{eq:oll}
\ol \ell=\lceil \lg (\ol t_{\max}/\tau)\rceil
\end{equation}
\begin{equation}\label{eq:conf-tail}
f\widehat w_{<\tau}^-=\min\{x\,|\,\ol\delta=\delta(x/\tau,P/\ol\ell^{\,2})\wedge
2(1+\ol\delta)x\geq f\widehat w_{< \tau}^\tau\}.
\end{equation}
Then by Theorem \ref{thm:tail},
\[\Pr\left[fw_{<\tau}\;\geq \;f\widehat w_{<\tau}^-\right]\geq 1-4P.\]
To see this, let $f\widehat w_{<\tau}^*$ be the value we would have
obtained if we had computed $f\widehat w_{<\tau}^-$ using the
real $t_{\max}$. Our error event is that 
$\ol{t}_{\max}<t_{\max}$ or $fw_{<\tau}\;< \;f\widehat w_{<\tau}^*$.
The former happens with probability at most $P$, and 
Theorem \ref{thm:tail} states that the latter happens with probability at most $3P$.
Hence none of these error events happen with probability at least $1-4P$,
but then $\ol{t}_{\max}\geq t_{\max}$, implying $f\widehat w_{<\tau}^-\leq
f\widehat w_{<\tau}^*\leq fw_{<\tau}$.

Theorem \ref{thm:main}, our main result, is proved in Sections
\ref{sec:tau}--\ref{sec:layers}. Theorem \ref{thm:tail}, which is much
easier, is proved in Section \ref{sec:large-error}.  In Section
\ref{sec:upper-upper} we show how to compute the $\ol{t}_{\max}$ used
for confidence lower bound with Theorem \ref{thm:tail}. Finally, in Section \ref{sec:heavy-tail} we will argue that we for typical weight distributions 
expect to get $\ol\ell=1$.

\subsection{The priority threshold}\label{sec:tau}
To prove Theorem \ref{thm:main} we need a handle on 
the variable priority threshold. With priority sampling we specify the number $k$ of samples, and use
as threshold $\tau$ the $(k+1)$th priority. Recall for any threshold
$t$ that the subscript $_{\leq t}$ indicates restriction to items with
weight below $t$. To relate this notation to a priority sample of some
specified size $k$, we let $k_{\leq t}$ denote $k$ minus the number of
items with weight bigger than $t$.  We define $k_{< t}$ accordingly.
With threshold $t$, the expected
number of samples is $E[|I^{\,t}|]=k-k_{\leq t}+w_{\leq t}/t=k-k_{< t}+w_{< t}/t$.
The last equality is because weights $w_i=t$ cancel out.

\paragraph{The ideal threshold}
We define the {\em
  ideal threshold $t^*$} to be the one leading to an expected number
of exactly $k$ samples, that is
$w_{\leq t^*}=t^* k_{\leq t^*}$.
\begin{lemma}\label{lem:mean-t} 
$t k_{\leq t}-w_{\leq t}$ is strictly increasing in $t$, so
(a) $t^*$ is uniquely defined, (b) $w_{\leq t}>k_{\leq t}t$ for any $t< t^*$,
and (c) $w_{\leq t}< k_{\leq t}t$ for any $t> t^*$. 
\end{lemma}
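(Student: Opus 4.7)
The plan is to prove strict monotonicity via a clean telescoping identity. First, I would rewrite the function in a form where its piecewise-linear structure is visible. Since $k - k_{\leq t}$ is by definition the number of items with $w_i > t$, each item contributes $\min(t, w_i)$ to the quantity $t(k - k_{\leq t}) + w_{\leq t}$, so
\[ g(t)\ :=\ tk_{\leq t} - w_{\leq t}\ =\ tk\ -\ \sum_{i \in I} \min(t,\, w_i). \]
In this form $g$ is manifestly continuous in $t$, and on any open interval between consecutive distinct weights it is linear with slope $k - |\{i : w_i > t\}| = k_{\leq t}$.

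Next I would derive, for any $s < t$, the key identity
\[ g(t) - g(s)\ =\ (t - s)\,k_{\leq s}\ +\ \sum_{i:\ s < w_i \leq t}(t - w_i), \]
by splitting $\min(t, w_i) - \min(s, w_i)$ on the three cases $w_i \leq s$ (contributing $0$), $s < w_i \leq t$ (contributing $w_i - s$), and $w_i > t$ (contributing $t - s$), then simplifying using the relation $k_{\leq t} = k_{\leq s} + |\{i : s < w_i \leq t\}|$. The second sum on the right-hand side is a sum of non-negative terms $t - w_i \geq 0$, and it is strictly positive unless every weight in $(s, t]$ equals $t$ exactly. The first term is strictly positive whenever $k_{\leq s} \geq 1$.

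Once strict monotonicity is in hand, the three numbered consequences are immediate: the equation $g(t) = 0$ has a unique positive solution, which is $t^*$, establishing (a); and the sign of $g$ flips from negative (equivalently $w_{\leq t} > t k_{\leq t}$, giving (b)) to positive (equivalently $w_{\leq t} < t k_{\leq t}$, giving (c)) exactly at $t^*$.

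The main obstacle is the regime $k_{\leq s} < 0$, where more than $k$ items have weight exceeding $s$; in that regime the first term of the identity is negative and the identity by itself does not yield strict increase. The way I would close this gap is to note that the only range relevant to the lemma's conclusions is $\{t : k_{\leq t} \geq 1\}$: the defining equation $t^* k_{\leq t^*} = w_{\leq t^*} > 0$ forces $k_{\leq t^*} \geq 1$, so $t^*$ lies in this range, and since $k_{\leq t}$ is non-decreasing in $t$, the range is an interval $[t_0, \infty)$ containing $t^*$. Restricted to this range, the identity above delivers the strict monotonicity needed, and (a), (b), (c) follow for all $t > 0$ (trivially so for $t < t_0$, since there $k_{\leq t} t \leq 0 < w_{\leq t}$ already gives (b) directly).
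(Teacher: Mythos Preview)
Your proof is correct and follows essentially the same piecewise-linear idea as the paper: between consecutive weight values the function has slope $k_{\leq t}$, and at a weight value both $t k_{\leq t}$ and $w_{\leq t}$ jump by the same amount so the function is continuous. Your rewriting $g(t)=tk-\sum_i\min(t,w_i)$ and the explicit difference identity make this structure more transparent, and you are in fact more careful than the paper: the paper's two-line proof tacitly assumes $k_{\leq t}>0$ throughout, whereas you isolate the range $\{t:k_{\leq t}\geq 1\}$ on which the identity yields strict increase, and then dispose of $t<t_0$ directly via the sign of $k_{\leq t}t$. The one small point you leave implicit is the standing assumption $n>k$ (needed so the $(k+1)$th priority exists), which is what guarantees $w_{\leq t^*}>0$ and, in the boundary case $k_{\leq t}=0$, that $w_{\leq t}>0$.
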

\begin{proof}
If $t$ increases without passing any weight value, then
$k_{\leq t}$ and $w_{\leq t}$ are not changed, and the
statement is trivial. When $t$ reaches the value of
some weight $w_i$, then both $t k_{\leq t}$ and $w_{\leq t}$
are increased by the same value $w_i$ (if there are $j$ weights with
value $w_i$, the increase is by $jw_i$).
\end{proof}
We would like to claim that the priority sampling threshold $\tau$ is
concentrated around $t^*$, but this may be far from true.  To
illustrate what makes things tricky to analyze, consider the case
where, say, we have $k-1$ weights of size $t^*$, and then a lot of
small weights that sum to $t^*$. In this case we get an upper bound on
$\tau$ which is close to $t^*$, but we do not get any good lower bound
on $\tau$ even if we have full randomness. On the other hand, in this
case, it is only little weight that is affected by the downwards
variance in $\tau$.

\subsection{Tightening the gap}
The following lemmas give us a much tighter understanding
of $t\in [t_{\min},t_{\max})$.
\begin{lemma}\label{lem:large-t}  
If $t\geq t^*$ then 
$t(1+\delta(w_{\leq t}/t_{\max},P))\geq t_{\max}$. 
\end{lemma}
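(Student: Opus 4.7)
The plan is to prove the statement by contradiction. Set $\delta = \delta(w_{\leq t}/t_{\max}, P)$ and $s = t(1+\delta)$, and suppose toward contradiction that $s < t_{\max}$. By the definition \req{eq:tmax-def} of $t_{\max}$ as the minimum of an upper set, this assumption forces $\Pr[\tau \geq s] > P$, so it suffices to establish the opposite bound $\Pr[\tau \geq s] < P$. The route is to translate the event $\{\tau\geq s\}$ into a concentration statement for threshold sampling at the deterministic threshold $s$, and to handle the mismatch between the normalization $w_{\leq t}/t_{\max}$ used in $\delta$ and the mean $w_{\leq t}/s$ that naturally arises from threshold sampling at $s$ by invoking the monotonicity of $\wp$.

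The key chain of steps is as follows. First, rewrite $\tau\geq s$ as ``at least $k+1$ items satisfy $q_i > s$.'' Second, split these items according to whether $w_i \leq t$ or $w_i > t$: since only $k-k_{\leq t}$ items have $w_i>t$, at most that many can contribute deterministically, so the event forces at least $k_{\leq t}+1$ items with $w_i\leq t$ to satisfy $q_i>s$. Third, observe that the count of items with $w_i\leq t<s$ for which $q_i>s$ is a sum of $2$-independent indicators $[h_i<w_i/s]$ whose mean is $\mu=w_{\leq t}/s$. Fourth, use the hypothesis $t\geq t^*$ together with Lemma~\ref{lem:mean-t}(c) to obtain $k_{\leq t}\geq w_{\leq t}/t=(1+\delta)\mu$, so the required deviation $k_{\leq t}+1-\mu$ strictly exceeds $\delta\mu$. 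Fifth, apply the error-probability bound \req{eq:p} to conclude $\Pr[\tau\geq s]\leq\wp(\mu,\delta)$. Sixth, since $s<t_{\max}$ gives $\mu=w_{\leq t}/s > w_{\leq t}/t_{\max}$, the strict monotonicity in Definition \ref{def:p}(a) yields $\wp(\mu,\delta)<\wp(w_{\leq t}/t_{\max},\delta)=P$, the desired contradiction.

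The step I expect to be the main obstacle is step four: arranging for the combinatorial slack afforded by Lemma~\ref{lem:mean-t} to translate into exactly the multiplicative factor $(1+\delta)$ demanded by the error-probability function $\wp$. This is where the hypothesis $t\geq t^*$ pays off, by ensuring $w_{\leq t}/t \leq k_{\leq t}$, and where the somewhat awkward normalization by $t_{\max}$ (rather than by $s$) inside the definition of $\delta$ has to be absorbed via the monotonicity in step six. The only edge case to check is $w_{\leq t}=0$: then $\mu=0$, $\delta$ is nominally infinite, and the inequality $t(1+\delta)\geq t_{\max}$ holds trivially.
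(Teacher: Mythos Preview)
Your proof is correct and follows essentially the same route as the paper's: assume $s=(1+\delta)t<t_{\max}$, use Lemma~\ref{lem:mean-t}(c) to bound $k_{\leq t}\geq (1+\delta)\,w_{\leq t}/s$, apply the error-probability function to threshold sampling at $s$, and use monotonicity of $\wp$ in its first argument to reach $\Pr[\tau\geq s]\leq P$, contradicting the minimality of $t_{\max}$. Your version is slightly more explicit about the split into large and small weights and about the edge case $w_{\leq t}=0$, but the argument is the same.
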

\begin{proof}
Let $\delta=\delta(w_{\leq t}/t_{\max},P)$ and $T=(1+\delta)t$. Assume for a contradiction that $T<t_{\max}$.
By Lemma~\ref{lem:mean-t}~(c), since $t\geq t^*$, we have $w_{\leq t}\leq 
tk_{\leq t}$, so 
\[k_{\leq t}\geq w_{\leq t}/t=(1+\delta)w_{\leq t}/T\]
For the priority threshold to be as big as $T$ we need
$|I^T_{\leq t}|\geq k_{\leq t}+1$ and $\E[|I^T_{\leq t}|]=w_{\leq t}/T$,
so 
\[\Pr[\tau\geq T]\leq \wp(w_{\leq t}/T,\delta)\leq \wp(w_{\leq t}/t_{\max},\delta)=P.\]
But this contradicts the minimality of $t_{\max}$ from \req{eq:tmax-def}.
\end{proof}

\begin{lemma}\label{lem:small-t}  
If $t\in[t_{\min},t^*)$ then $t\geq (1-\delta(w_{\leq t}/t,P))t^*$. 
\end{lemma}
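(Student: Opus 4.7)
The plan is to mirror the proof of Lemma~\ref{lem:large-t}, this time applying the concentration bound directly at threshold $t$. Suppose for contradiction that $t<(1-\delta)t^*$ with $\delta=\delta(w_{\leq t}/t,P)$; I will show $\Prp{\tau<t}<P$ strictly. Because each priority $q_i=w_i/h_i$ has a continuous distribution, the CDF of the $(k{+}1)$th largest priority $\tau$ is continuous, so the strict inequality yields some $s>t$ with $\Prp{\tau<s}\leq P$. By \req{eq:tmin-def} this forces $t_{\min}\geq s>t$, contradicting the hypothesis $t\geq t_{\min}$.

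The key preliminary estimate, strengthening Lemma~\ref{lem:mean-t}(b), is that for every $t\leq t^*$,
\[ w_{\leq t}\;\geq\; k_{\leq t}\,t^*.\]
I would prove this by writing $w_{\leq t}=w_{\leq t^*}-\sum_{t<w_i\leq t^*}w_i$, then substituting $w_{\leq t^*}=k_{\leq t^*}\,t^*$ and $k_{\leq t^*}-k_{\leq t}=|\{i\,:\,t<w_i\leq t^*\}|$ to recast the right-hand side as $k_{\leq t}\,t^*+\sum_{t<w_i\leq t^*}(t^*-w_i)$, which is at least $k_{\leq t}\,t^*$ since every summand is non-negative.

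Combined with the contradictory assumption $t<(1-\delta)t^*$, the key estimate yields $k_{\leq t}\,t<(1-\delta)\,w_{\leq t}$, hence $k_{\leq t}<(1-\delta)\mu$ for $\mu:=w_{\leq t}/t=\Ep{|I^t_{\leq t}|}$. Since every item with $w_i>t$ is automatically in $I^t$, the event $\tau<t$ coincides with $|I^t_{\leq t}|\leq k_{\leq t}$, which requires a downward deviation from the mean of at least $\mu-k_{\leq t}>\delta\mu$. Property \req{eq:p} together with the strict monotonicity of $\wp$ in its second argument (Definition~\ref{def:p}(a)) then delivers $\Prp{\tau<t}<\wp(\mu,\delta)=P$, completing the contradiction outlined above.

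The hard part is spotting the need for the strengthened estimate $w_{\leq t}\geq k_{\leq t}\,t^*$: Lemma~\ref{lem:mean-t}(b) only supplies $w_{\leq t}>k_{\leq t}\,t$, which quantifies no separation between $\mu$ and $k_{\leq t}$ as a function of the gap $t^*-t$. The stronger bound, which uses that all weights in $(t,t^*]$ are capped at $t^*$, is exactly what converts the assumed separation $t<(1-\delta)t^*$ into a tail-probability gap strictly below $P$.
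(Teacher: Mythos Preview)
Your proof is correct and follows essentially the same approach as the paper: both assume for contradiction that the bound fails, apply the concentration inequality at threshold $t$ to obtain $\Pr[\tau<t]<P$, and then contradict the maximality of $t_{\min}$; both rely on the key inequality $w_{\leq t}\geq k_{\leq t}\,t^*$, which the paper attributes (somewhat loosely) to Lemma~\ref{lem:mean-t}(b) while you prove it explicitly. Your use of continuity of the CDF of $\tau$ to handle the boundary case $t=t_{\min}$ is a minor stylistic variant of the paper's infinitesimal-perturbation argument, in which an infinitesimally larger $t'>t$ with no weights in $(t,t']$ is chosen.
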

\begin{proof}
Let $\delta=\delta(w_{\leq t}/t,P)$.
The proof is by contradiction against our maximal lower bound $t_{\min}$ from \req{eq:tmin-def}.
The priority threshold is smaller than $t$ if 
$|I_{\leq t}^{t}|\leq k_{\leq t}$. The
expectancy is $\E[|I_{\leq t}^{t}|]=w_{\leq  t}/t$.
Suppose for a contradiction that $(1-\delta)w_{\leq t}/t> k_{\leq t}$.
Then
\[\Pr[\tau<t]< \wp(w_{\leq t}/t,\delta)=P\textnormal,\]
implying that $t\geq t_{\min}$ is a lower bound. If $t>t_{\min}$ this
contradicts the maximality of $t_{\min}$. Otherwise, we pick an 
infinitesimally larger $t'>t$ with no weights in $(t,t']$, that is,
$w_{\leq t'}=w_{\leq t}$ and $k_{\leq t'}=k_{\leq t}$. By Definition \ref{def:p} (a),
we get a corresponding infinitesimally larger 
$\delta'=\delta(w_{\leq t'}/t',P)>\delta$, and then we still have
$(1-\delta')w_{\leq t'}/t'>k_{\leq t'}$, implying that
$t'>t=t_{\min}$ is a lower bound contradicting the maximality of $t_{\min}$.
Thus
we conclude that $(1-\delta)w_{\leq  t}/t\leq  k_{\leq  t}$, or equivalently, 
$t\geq (1-\delta)w_{\leq  t}/k_{\leq  t}$.
Finally by Lemma \ref{lem:mean-t} (b), since $t<t^*$, we have $w_{\leq  t}/k_{\leq t}
\geq t^*$. This completes the proof that $t\geq (1-\delta_\ell)t^*$.
\end{proof}
In order to give a joint analysis for $t$ bigger and smaller than $t^*$,
we make a conservative combination of Lemma \ref{lem:large-t} and
\ref{lem:small-t}.
\begin{lemma}\label{lem:all-t}  
Suppose $[t^-,t^+)=[t_{\min},t^*)$ or $[t^-,t^+)=[t^*,t_{\max})$.
If $t\in [t^-,t^+)$ then
$t\geq (1-\delta(w_{\leq t}/t^+,P))t^+$. 
\end{lemma}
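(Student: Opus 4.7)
The plan is to handle the two sub-intervals separately, applying Lemma~\ref{lem:small-t} to the lower one and Lemma~\ref{lem:large-t} to the upper one, and then in each case reconcile a mild discrepancy between the denominator that the previous lemma gives inside $\delta(\cdot,P)$ and the denominator $t^+$ that the present lemma wants. Both reconciliations are one-line arguments: a monotonicity argument in the lower case, and an elementary inequality $1/(1+\delta)\geq 1-\delta$ in the upper case.

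First I record a monotonicity fact that will be used once: since $\wp(\mu,\delta)$ is continuous and strictly decreasing in both arguments (Definition~\ref{def:p}(a)), the implicitly defined $\delta(\mu,P)$ from \req{def:delta} is strictly decreasing in $\mu$ for fixed $P$. Therefore, whenever $\mu'\geq \mu''$, we have $\delta(\mu',P)\leq \delta(\mu'',P)$.

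For the case $[t^-,t^+)=[t_{\min},t^*)$, so that $t^+=t^*$, Lemma~\ref{lem:small-t} gives $t\geq (1-\delta(w_{\leq t}/t,P))t^*$. Since $t<t^+=t^*$, we have $w_{\leq t}/t\geq w_{\leq t}/t^+$, and by the monotonicity fact, $\delta(w_{\leq t}/t,P)\leq \delta(w_{\leq t}/t^+,P)$. Hence $1-\delta(w_{\leq t}/t,P)\geq 1-\delta(w_{\leq t}/t^+,P)$, and the bound from Lemma~\ref{lem:small-t} implies the desired $t\geq (1-\delta(w_{\leq t}/t^+,P))\,t^+$.

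For the case $[t^-,t^+)=[t^*,t_{\max})$, so that $t^+=t_{\max}$, Lemma~\ref{lem:large-t} gives $t(1+\delta)\geq t_{\max}$ where $\delta=\delta(w_{\leq t}/t_{\max},P)=\delta(w_{\leq t}/t^+,P)$. Equivalently, $t\geq t^+/(1+\delta)$. For any $\delta\geq 0$, $(1-\delta)(1+\delta)=1-\delta^2\leq 1$, so $1/(1+\delta)\geq 1-\delta$, and multiplying by $t^+$ yields $t\geq (1-\delta)t^+$, which is exactly the desired bound. Combining the two cases completes the proof. The only mildly nontrivial step is the monotonicity of $\delta(\cdot,P)$, but it is an immediate consequence of Definition~\ref{def:p}(a), so I expect no real obstacle here; the lemma is essentially a rephrasing for later convenience.
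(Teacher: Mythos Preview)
Your proof is correct and follows essentially the same approach as the paper: both split into the two sub-intervals, apply Lemma~\ref{lem:small-t} on $[t_{\min},t^*)$ together with the monotonicity of $\delta(\cdot,P)$ from Definition~\ref{def:p}(a), and apply Lemma~\ref{lem:large-t} on $[t^*,t_{\max})$ together with the elementary inequality $1/(1+\delta)\geq 1-\delta$.
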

\begin{proof} Let $\delta=\delta(w_{\leq t}/t^+,P)$.
If $[t^-,t^+)=[t^*,t_{\max})$, 
by Lemma \ref{lem:large-t}, 
$t\geq t^+/(1+\delta)\geq (1-\delta)t^+$.
If $[t^-,t^+)=[t_{\min},t^*)$, 
by Lemma \ref{lem:small-t}, 
$t\geq (1-\delta(w_{\leq t}/t,P))t^+>(1-\delta)t^+$.
The last inequality uses that $\delta(w_{\leq t}/t,P)\leq \delta$.
This is because $w_{\leq t}/t\geq w_{\leq t}/t^+$, so by
Definition \ref{def:p} (a), $\wp(w_{\leq t}/t,\delta)\geq 
\wp(w_{\leq t}/t^+,\delta)=P$.
\end{proof}
Loosing a factor 2 in the error probability to cover $t$ bigger or smaller
than $t^*$, our good event \req{eq:good-t0}
reduces to
\begin{align}
\forall t\in [t^-,t^+): \delta&=\delta(f w_{<t}/(3 t),P)\leq 1/3\nonumber\\
&\implies|f\widehat w^{\,t}-fw|> 6\delta f w_{<t}\label{eq:good-t1}
\end{align}
By Lemma \ref{lem:all-t}, the bound on $\delta$ implies that
$t\geq (2/3) t^+$. Therefore \req{eq:good-t1} is implied by
\begin{align}
\forall t\in [t^-,t^+): \delta&=\delta(f w_{<t}/(2 t^+),P)\leq 1/3\nonumber\\
&\implies|f\widehat w^{\,t}-fw|\leq 6\delta f w_{<t}.\label{eq:good-t2}
\end{align}
One advantage of dealing with $f\widehat w_{<t}/t^+$ instead
of $f\widehat w_{<t}/t$ is that $f\widehat w_{<t}/t^+$ is
proportional to $f\widehat w_{<t}$ hence increasing in $t$ whereas
$f\widehat w_{<t}/t$ may not be monotone in $t$. Then $\delta(f w_{<{t'}^-}/(2 t^+),P)$ is decreasing in $t$. 
If 
$\delta(f w_{<t^-}/(2 t^+),P)> 1/3$, we let ${t'}^-$ be the
smallest value such that $\delta(f w_{<{t'}^-}/(2 t^+),P)\leq 1/3$;
otherwise we set ${t'}^-=t^-$. 
Then \req{eq:good-t2} is equivalent to
\begin{align}
\forall t\in [{t'}^-,t^+): \delta&=\delta(f w_{<t}/(2 t^+),P)\nonumber\\
&\implies|f\widehat w^{\,t}-fw|\leq 6\delta f w_{<t}.\label{eq:good-t3}
\end{align}

\subsection{Dividing into layers}\label{sec:layers}
We now define a sequence $t_0>t_1>\cdots>t_{L+1}$ of decreasing
thresholds with $t_0=t^+$ and $t_{L+1}={t'}^-$.
For $\ell=0,..,L$ we require
\begin{equation}\label{eq:heavy-rest}
fw_{\leq t_{\ell+1}}\geq \fw{<t_{\ell}}/2.
\end{equation}
For $\ell<L$, we pick $t_{\ell+1}$ smallest possible
satisfying \req{eq:heavy-rest}. Then
\begin{equation}\label{eq:halving}
fw_{<t_{\ell+1}}<\fw{<t_{\ell}}/2.
\end{equation}
We arrive $\ell=L$ when $fw_{\leq {t'}^-}\geq \fw{<t_{\ell}}/2$, and then
we set $t_{L+1}={t'}^-$. 

For each ``layer'' $\ell\leq L$, we define
\begin{equation}\label{eq:dl}
\delta_{\ell}=\delta(fw_{<t_{\ell}}/(2t^+),P),
\end{equation}
noting that this is the same value as we would use for $t=t_\ell$ in
\req{eq:good-t3}. By definition, for all $\ell\leq L$, we have
$\wp(fw_{<t_{\ell}}/(2t^+),\delta_\ell)=P$.
Since $t_\ell>{t'}^-$, we have $\delta_\ell\leq 1/3\leq 1$, so
it follows from Definition \ref{def:p} (c) that there
is a constant $C$ such that $fw_{<t_\ell}\delta_\ell^2=C$ for all $\ell\leq L$.
For $\ell=1,...,L$, by \req{eq:halving}, we have
$fw_{< t_{\ell}}<fw_{<t_{\ell-1}}/2$. Therefore
\begin{align}
\delta_{\ell}&>\sqrt 2\,\delta_{\ell-1}\label{eq:delta-up}\\
\delta_\ell fw_{<t_\ell}&<\delta_{\ell-1} fw_{<t_{\ell-1}}/\sqrt 2\label{eq:error-down}
\end{align}
This will correspond to an effect where the relative errors are
geometrically increasing while the absolute errors are geometrically
decreasing. Another important thing to notice is
that by \req{eq:heavy-rest}, $fw_{\leq t_{\ell+1}}\geq \fw{<t_{\ell}}/2$,
so $\delta(fw_{<t_{\ell+1}}/t^+,P)\leq
\delta(fw_{<t_{\ell}}/(2t^+),P)=\delta_{\ell}$. Therefore, by Lemma \ref{lem:all-t},
\begin{equation}\label{eq:tl+1}
t_{\ell+1}\geq (1-\delta_\ell)t^+.
\end{equation}

\paragraph{Good layers}
For each layer $\ell<L$, our good event will be that for weights in
$[t_{\ell+1},t_\ell)$, the relative estimate error is bounded by
  $2\delta_{\ell}$. Formally
\begin{align}
\forall t\in[t_{\ell+1},t^+):&\left|f\widehat w_{[t_{\ell+1},t_\ell)}^{\,t}-
f w_{[t_{\ell+1},t_\ell)}\right|\nonumber\\ &\quad\leq
2\delta_\ell f w_{[t_{\ell+1},t_\ell)}\label{eq:good-l}
\end{align}
Above, the subscript $_{[t_{\ell+1},t_\ell)}$ denotes the restriction to
items $i$ with weights $w_i\in [t_{\ell+1},t_\ell)$. The last
layer $L$ is special in that we want to consider all weights
below $t_L$. Here the good event is
that 
\begin{align}
\forall t\in[t_{L+1},t^+):&\left|f\widehat w_{<t_L}^{\,t}-
f w_{<t_L}\right|\nonumber\\ &\quad\leq
3\delta_L f w_{< t_L}\label{eq:good-L}
\end{align}
To prove Theorem \ref{thm:main}, we are going to prove two statements.
\begin{itemize}
\item Assume that all layers are good satisfying  \req{eq:good-l} and \req{eq:good-L}. If for any $t\in({t'}^-,t^+]$ we add up the errors from all relevant
layers, then the total error is bounded by $6\delta(f w_{<t}/(2 t^+),P) f w_{<t}$, so \req{eq:good-t3} satisfied.
\item If $P_\ell$ is the probability that a layer $\ell$ fails, then
the $P_\ell$ are geometrically increasing and $P_L=O(P)$, so by union,
the probability that any layer fails is $O(P)$.
\end{itemize}

\paragraph{Adding layer errors}
Assuming that all layers are good satisfying \req{eq:good-l} and
\req{eq:good-L}, we pick an arbitrary threshold $t\in [{t'}^-,t^+)$.
We which to bound the estimate error $|f\widehat w^{\,t}-fw|$. 

Let $h$ be the layer such that $t\in [t_{h-1},t_h)$. We can
only have estimate errors from weights below $t< t_h$, so 
\begin{align*}
|f\widehat w^{\,t}-fw|&\leq\sum_{\ell=h}^{L-1} 
\left|f\widehat w_{[t_{\ell+1},t_\ell)}^{\,t}-
f w_{[t_{\ell+1},t_\ell)}\right|+\left|f\widehat w_{<t_L}^{\,t}-
f w_{<t_L}\right|\\
&\leq\sum_{\ell=h}^{L-1} 
2\delta_\ell w_{[t_{\ell+1},t_\ell)}+3\delta_L f w_{< t_L}\\
&=\sum_{\ell=h}^{L-1} 
2\delta_\ell (f w_{<t_{\ell}}-f w_{<t_{\ell+1}}) +3\delta_L f w_{< t_L}
\end{align*}
By \req{eq:delta-up}, the $\delta_\ell$ are increasing, so in the
above sum, every $f w_{<t_{\ell}}$ appears with a positive coefficient.
It follows that we could only get a larger sum if \req{eq:halving} was
more than tight with $fw_{<t_\ell}=fw_{<t_{\ell-1}}/2$ for $\ell\leq L$.
Then we would have $f w_{<t_{\ell}}-f w_{<t_{\ell+1}}=f w_{<t_{\ell}}/2$
and corresponding to \req{eq:error-down}, $\delta_\ell fw_{<t_\ell}=
\delta_{\ell-1}fw_{<t_{\ell-1}}/\sqrt 2$. 
Thus we get 
\begin{align*}
|f\widehat w^{\,t}-fw|&\leq\sum_{\ell=h}^{L-1} 
2\delta_\ell (f w_{<t_{\ell}}-f w_{<t_{\ell+1}})+3\delta_L f w_{< t_L}\\
&\leq\sum_{\ell=h}^{L-1} 
\delta_h f w_{<t_{h}}/{\sqrt2}^{\ell-h} +3\delta_h f w_{<t_{h}}/{\sqrt2}^{L-h}\\
&< 3\delta_h f w_{<t_{h}}
\end{align*}
The last inequality exploits that $\sum_{i=1}^{\infty}1/\sqrt 2=1/(1-1/\sqrt 2)<3$.
Since $t\leq t_h$, we have
$\delta(fw_{<t}/(2t^+),P)\geq\delta(fw_{<{t_h}}/(2t^+),P)=\delta_h$.
Also, $t>t_{h+1}$, so by \req{eq:heavy-rest}, 
$fw_{< t}\geq fw_{\leq t_{h+1}}\geq fw_{< t_h}/2$, so
\[\delta(fw_{<t}/(2t^+),P)\,fw_{< t}\geq \delta_{h} fw_{<t_{h}}/2.\]
Therefore
\[|f\widehat w^{\,t}-fw|\leq 3\delta_h f w_{<t_{h}}\leq
6\,\delta(fw_{<t}/(2t^+),P)\,fw_{< t}.\]
Thus we conclude that \req{eq:good-t3} follows
from \req{eq:good-l} and
\req{eq:good-L}.

\paragraph{Intermediate layer error probabilities}
We now consider the intermediate layers $\ell=0,...,L-1$. 
We want to show that the probability $P_\ell$ of violating
\req{eq:good-l} increases geometrically with $\ell$, yet remains
bounded by $P$. 
First we consider the upper bound part of \req{eq:good-l} 
\begin{equation}\label{eq:upper-l}
\forall t\in[t_{\ell+1},t^+):f\widehat w_{[t_{\ell+1},t_\ell)}^{\,t}\leq
(1+2\delta_\ell)f w_{[t_{\ell+1},t_\ell)}.
\end{equation}
We claim that it can never be violated. The worst that
can happen is that every item $i$ in the layer gets sampled, and
the estimate is at most $f_i t^+$. However, the items all have weight at
least $t_{\ell+1}$ and by \req{eq:tl+1}, $t_{\ell+1}\geq (1-\delta_\ell)t^+$.
The increase is thus by at most a factor $t^+/t_{\ell+1}\leq 
1/(1-\delta_{\ell})$, which for $\delta_\ell\leq 1/3$ is at most $(1+2\delta_\ell)$. Thus \req{eq:upper-l}
is satisfied regardless of the random choices.

We now consider the lower bound part of \req{eq:good-l} 
\begin{equation}\label{eq:lower-l}
\forall t\in [t_{\ell+1},t^+):f\widehat w_{[t_{\ell+1},t_\ell)}^{\,t}\geq
(1-2\delta_\ell)f w_{[t_{\ell+1},t_\ell)}.
\end{equation}
This event could happen. To bound the probability, we will focus on
the loss $f w_{[t_{\ell+1},t_\ell)}-f\widehat
w_{[t_{\ell+1},t_\ell)}^{\,t}$. When bounding the loss, we do not
consider the gain from possible overestimates of sampled
items. We only consider the actual losses $f_iw_i$ from unsampled items
$i$. Conservatively, we consider an item $i$ lost if $q_i\leq t^+$. This
includes any item unsampled with some threshold $t\leq t^+$. The loss
for every threshold $t\leq t^+$ is thus bounded as
\[f w_{[t_{\ell+1},t_\ell)}-f\widehat
  w_{[t_{\ell+1},t_\ell)}^{\,t}\leq \sum_{i:w_i\in [t_{\ell+1},t_\ell)}
  [q_i\leq t^+]f_iw_i.\]
We know that $w_i\geq t_{\ell+1}\geq (1-\delta_\ell)t^+$. Therefore
\begin{align*}
\Pr[q_i\leq t^+]&=\Pr[h_i\geq w_i/t^+]\leq \Pr[h_i\geq t_{\ell+1}/t^+]\\
&= 1-t_{\ell+1}/t^+\leq \delta_\ell.
\end{align*}
The expected loss from layer $\ell$ is thus bounded by
\[\sum_{i:w_i\in[t_{\ell+1},t_\ell)}\delta_\ell f_iw_i=\delta_{\ell}
fw_{[t_{\ell+1},t_\ell)}.\] 
For \req{eq:lower-l} to fail, we need a loss that is twice this big, that is,
$2\delta_{\ell}fw_{[t_{\ell+1},t_\ell)}$. We know that items $i$'s loss contribution
$[q_i\leq t^+]f_iw_i$ depends only on $h_i$ and that
it is at most $t^+$. The probability of violating
\req{eq:lower-l}  is therefore bounded by
\[\wp(\delta_{\ell}fw_{[t_{\ell+1},t_\ell)}/t^+,\,1)\leq
\wp(\delta_{\ell}fw_{<t_\ell}/(2t^+),\,1).\] 
Let $\mu_\ell=\delta_{\ell}fw_{<t_\ell}/(2t^+)$. Then 
$P_\ell=\wp(\mu_\ell,1)$ is our bound on the error probability.
From \req{eq:error-down}, we know that
$\delta_{\ell}fw_{<t_\ell}<\delta_{\ell-1}fw_{<t_{\ell-1}}/\sqrt 2$,
so $\mu_\ell<\mu_{\ell-1}/\sqrt 2$. It follows from 
Definition \ref{def:p} (c) that $\wp(\mu_\ell,1)>\sqrt 2\,\wp(\mu_{\ell-1},1)$,
so the $P_\ell$ are geometrically increasing, and their sum is bounded by
$3P_{L-1}$. 

Finally, by definition \req{eq:dl},
$\wp(fw_{<t_{L-1}}/(2t^+),\,\delta_{L-1})=P$. To compare $P$ with
$P_{L-1}$, by Definition \ref{def:p} (c), we compare
$fw_{<t_{L-1}}/(2t^+)\delta_{L-1}^2$ with $\mu_{L-1}1^2=
\delta_{L-1}fw_{<t_{L-1}}/(2t^+)$, and conclude that $P_{L-1}\leq
\delta_{L-1}P\leq P/3$. The probability that any intermediate layer
$\ell<L$ fails \req{eq:lower-l}  is thus at most
$P$. Since \req{eq:lower-l} was always satisfied, we conclude that
\req{eq:good-L} is satisfied for all layers $\ell<L$ with probability
$P$.

\paragraph{The last layer}
We now consider items $i$ with  weights below $t_L$. On the upper bound
side, the  good event \req{eq:good-L} states that
\begin{equation}\label{eq:upper-L}
\forall t\in[t_{L+1},t^+):f\widehat w_{<t_L}^{\,t}\leq (1+3\delta_L) f w_{< t_L}.
\end{equation}
We will show that \req{eq:upper-L} fails with probability less than $P$. For an upper bound on the estimate with any threshold 
$t\in [t_{L+1},t^+)$, we include
item $i$ if $q_i>t_{L+1}$, and if so, we give it at an estimate of $f_i t^+$
which is bigger than the sampled estimate with threshold $t\leq t^+$.
The result is  at most a factor 
$t^+/t_{L+1}$ bigger than in the sampled estimate with threshold $t_{L+1}$, and
by \req{eq:tl+1}, $t_{L+1}\geq (1-\delta_L)t^+$. Thus, regardless of the
random choices made,  we conclude that
\[\forall t\in[t_{L+1},t^+):f\widehat w_{<t_L}^{\,t}
\leq f\widehat w_{\leq t_L}^{t_{L+1}}/(1-\delta_L).\]
Consider the following error event:
\begin{equation}\label{eq:upper-L-error}
f\widehat w_{\leq t_L}^{t_{L+1}}>(1+\delta_L)fw_{\leq t_L}
\end{equation}
The maximal item contribution to $f\widehat w_{\leq t_L}^{t_{L+1}}$
is bounded by $t_L\leq t^+$, so the probability of \req{eq:upper-L-error}
is bounded by 
\[\wp(fw_{\leq t_L}/t^+,\delta_L)\leq \wp(fw_{\leq t_L}/(2t^+),\delta_L)/2=P/2.\]
If \req{eq:upper-L-error} does not happen, then since $\delta_L\leq 1/3$,
\begin{align*}
\forall t\in[t_{L+1},t^+):f\widehat w_{<t_L}^{\,t}&\leq(1+\delta_L)fw_{\leq t_L}/(1-\delta_L)\\
&\leq (1+3\delta_L)fw_{\leq t_L},
\end{align*}
which is the statement of \req{eq:upper-L}. We conclude that
\req{eq:upper-L} fails with probability at most $P/2$.

We now consider the 
lower bound side of \req{eq:good-L} which states that
\begin{equation}\label{eq:lower-L}
\forall t\in[t_{L+1},t^+):f\widehat w_{<t_L}^{\,t}\geq (1-3\delta_L) f w_{< t_L}
\end{equation}
The analysis is very symmetric to the upper bound case.
For a lower bound for weights $w_i<t_{L}$ with any threshold
$t\in[t_{L+1},t^+)$, we only include $i$ if $q_i>t^+$, and if so, we
only give $i$ the estimate $f_it_{L+1}$ which is smaller than
the sampled estimate with threshold $t>t_{L+1}$. Our
samples are exactly the same as those we would get with
threshold $t^+$, and our estimates are smaller by a factor $t_{L+1}/t^+\geq (1-\delta_L)$,
so we conclude that regardless of the random choices,
\[\forall t\in[t_{L+1},t^+):f\widehat w_{<t_L}^{\,t}\geq f\widehat 
w_{<t_L}^{t^+}(1-\delta_L)\]
We now consider the error event:
\begin{equation}\label{eq:lower-L-error}
f\widehat w_{\leq t_L}^{t^+}<(1-\delta_L)fw_{\leq t_L}
\end{equation}
The maximal item contribution to $f\widehat w_{\leq t_L}^{t^+}$ is $t^+$,
so as for \req{eq:upper-L-error}, we get that the probability
of \req{eq:lower-L-error} is bounded by $\wp(fw_{\leq t_L}/t^+,\delta_L)\leq P/2$.

If \req{eq:lower-L-error} does not happen, then since $\delta_L\leq 1/3$,
\begin{align*}
\forall t\in[t_{L+1},t^+):f\widehat w_{<t_L}^{\,t}&\geq(1-\delta_L)fw_{\leq t_L}(1-\delta_L)\\
&>(1-2\delta_L)fw_{\leq t_L},
\end{align*}
which implies \req{eq:lower-L}. We conclude that
\req{eq:upper-L} fails with probability at most $P/2$. Including the
probability of an upper bound error \req{eq:upper-L-error}, we get that \req{eq:good-L} fails
with probability at most $P$. 

\paragraph{Summing up}
Above we proved that the probability that \req{eq:good-l} failed for any
layer $\ell<L$ was at most $P$. We also saw that \req{eq:good-L} failed
with probability P. If none of them fail, we proved
that \req{eq:good-t3} and hence \req{eq:good-t1} was satisfied,
so \req{eq:good-t1} fails with probability at most $2P$.
However, for \req{eq:good-t0} we need \req{eq:good-t1} both
for $[t^-,t^+)=[t_{\min},t^*)$ and for $[t^-,t^+)=[t^*,t_{\max})$, so
\req{eq:good-t0} fails with probability at most $4P$. Finally, we
need to consider both the case that $\tau>t_{\max}$ and $\tau\leq t_{\min}$.
Either of these events happens with probability at most $P$, so we
conclude that the overall error probability is at most $6P$.
If no error happened and $\delta=\delta(f w_{<t}/(3 t),P)\leq 1/3$,
then $|f\widehat w^{\,t}-fw|> 6\delta f w_{<t}$. This completes
the proof of Theorem \ref{thm:main}.

\subsection{Large errors}\label{sec:large-error}
The limitation of Theorem \ref{thm:main} is that it can only be used
to bound the probability that the estimate error $|f\widehat w^\tau-fw|$
is bigger than $2f w_{<\tau}$. Note that errors above $f
w_{<\tau}$ can only be overestimates. We will now target larger errors
and prove the statement of Theorem \ref{thm:tail}:
\begin{quote}{\em Set
\begin{align*}
\ell&=\lceil \lg (t_{\max}/\tau)\rceil\\
\delta&=\delta(f w_{<\tau}/\tau,\,P/\ell^2).
\end{align*}
Then 
\[\Pr[f\widehat w^\tau_{<\tau}> (2+2\delta)fw_{<\tau}]< 3P\]
}\end{quote}
\begin{proof}[ of Theorem \ref{thm:tail}]
Since we are
targeting arbitrarily large relative errors $\delta$, for the
probability function $\wp$, we can only assume conditions (a) and (b) in
Definition \ref{def:p}.

We will use some of the same ideas as we used for the last layer in Section \ref{sec:layers}, but 
tuned for our situation. 
We will study intervals based on $t_\ell=t_{\max}/2^\ell$ for $\ell=1,2,...$. Interval
$\ell$ is for thresholds $t\in [t_\ell,t_{\ell-1})=[t_\ell,2t_{\ell})$, so $t<t_{\max}$ belongs to interval $\ell=\lceil \lg (t_{\max}/t)\rceil$.
To define the error for interval $\ell$, set
\[\delta_\ell=\wp(fw_{< t_{\ell}}/t_{\ell},\, P/\ell^2).\]
The good non-error event for interval $\ell$ is that 
\begin{equation}\label{eq:no-large-error}
f\widehat w_{< t_{\ell}}^{t_{\ell}}\leq (1+\delta_\ell)fw_{< t_{\ell}}.
\end{equation}
By definition, the probability that \req{eq:no-large-error} is violated is 
at most $P/\ell^2$, so the probability of failure for any $\ell$ is bounded by $\sum_{\ell=1}^\infty P/\ell^2\leq 
P\pi^2/6<1.65 P$. The probability that $\tau=t_{\max}$ is zero, so by
\req{eq:tmax-def}, the event
\begin{equation}\label{eq:tmax}
\tau< t_{\max}.
\end{equation}
is violated with probability at most $P$. Our total error probability
is thus bounded by $2.65 P<3P$. Below we assume no errors, that is,
\req{eq:no-large-error} holds for all $\ell$ and so does 
\req{eq:tmax}.

Consider an arbitrary threshold $t\in (0,t_{\max})$ and let $\ell$ be
such that $t\in[t_\ell,2t_{\ell})$. 
We can only have errors for weights $w_i<t$, so we want an upper
bound on $f\widehat w_{<t}^{\,t}$. The basic idea for an upper bound
is to say that we sample all items with priority above $t_{\ell}$, just
as in the estimate $f\widehat w_{<t_{\ell}}^{t_{\ell}}$, but instead of giving
sampled item $i$ estimate $\max\{w_i,t_{\ell}\}$, it gets value 
$\max\{w_i,t\}$ which
is at most $t/t_{\ell}< 2$ times bigger. Thus, regardless of the random choices, 
\[f\widehat w^{t}_{<t}<2 f\widehat w^{t_{\ell}}_{<t}.\]
Assuming no error as in \req{eq:no-large-error}, we get
\[f\widehat w^{t_{\ell}}_{<t}=
f w_{[t_{\ell},t]}+f\widehat w^{t_{\ell}}_{<t_{\ell}}\leq 
f w_{<t}+\delta_{\ell}f w_{<t_{\ell}}.\]
For the further analysis, we need a general lemma.
\begin{lemma}\label{lem:decr-t} For thresholds $t',t$,  and relative errors
$\delta',\delta$, 
if $t'<t$ and $\wp(fw_{< t'}/t',\delta_{< t'})=\wp(fw_{\leq t}/t,\delta_{\leq t})$,
then 
\[\delta' f w_{<t'}<\delta f w_{<t}.\]
Hence, for any fixed target error probability $Q$ in \req{eq:threshold}, the
target error 
\[\delta(\fw{<t}/t,Q)\fw{<t}\]
decreases together with the threshold $t$.
\end{lemma}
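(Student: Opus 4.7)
The claim is really a monotonicity statement along a $\wp$-level set. Writing $\mu' = \fw{<t'}/t'$ and $\mu = \fw{\leq t}/t$, the hypothesis becomes $\wp(\mu',\delta') = \wp(\mu,\delta)$, and I would prove the inequality $\delta' \fw{<t'} < \delta \fw{<t}$ by splitting on the sign of $\mu - \mu'$, using Definition~\ref{def:p}(a) for one case and Definition~\ref{def:p}(b) for the other.

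In the case $\mu' < \mu$, I would invoke Definition~\ref{def:p}(b) in its contrapositive form: the $\wp$-equality rules out $\wp(\mu',\delta') < \wp(\mu,\delta)$, so the hypothesis $\mu'\delta' \ge \mu\delta$ of (b) must fail, giving $\mu'\delta' < \mu\delta$. Multiplying by $t'$ and using $t' < t$ then yields $\delta' \fw{<t'} = t'\mu'\delta' < t'\mu\delta = (t'/t)\,\delta\,\fw{\leq t} < \delta\,\fw{\leq t}$, which coincides with $\delta\,\fw{<t}$ under the standard (continuous-hashing) convention that the threshold does not coincide with any weight value.

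In the case $\mu' \ge \mu$, I would instead use Definition~\ref{def:p}(a): strict monotonicity of $\wp$ in its first argument forces $\delta' \le \delta$ in order for the two $\wp$-values to agree, with strict inequality whenever $\mu' > \mu$. Combined with the trivial $\fw{<t'} \le \fw{<t}$, this gives $\delta'\fw{<t'} \le \delta\fw{<t}$; strictness then comes either from $\delta' < \delta$ (when $\mu' > \mu$) or, in the boundary situation $\mu' = \mu$, from the observation that a weight must then lie in $[t',t)$, so $\fw{<t'} < \fw{<t}$.

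The ``Hence'' corollary falls out immediately: reading $g(t) = \delta(\fw{<t}/t, Q)\,\fw{<t}$ as a function of $t$, the lemma says exactly that $g(t') < g(t)$ whenever $t' < t$, so the target error does decrease together with the threshold. The main subtlety I expect is the bookkeeping around the asymmetry between the $\leq t$ in the hypothesis and the $<t$ in the conclusion; this is precisely where the implicit genericity assumption (no weight sits exactly at $t$) is used, and it is the only place in the argument where one must be careful beyond routine application of (a) and (b).
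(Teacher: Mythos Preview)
Your case-split on the sign of $\mu-\mu'$ is a genuinely different route from the paper's. The paper never compares $\mu'$ and $\mu$ globally; instead it breaks the descent from $t$ to $t'$ into a chain of two kinds of atomic moves: (i) at a fixed threshold $s$, pass from $fw_{\leq s}$ to $fw_{<s}$ using condition~(b) of Definition~\ref{def:p}; (ii) drop the threshold from $s$ to $s'$ across an interval $(s',s)$ containing no weight, so that $fw_{\leq s'}=fw_{<s}$, and use condition~(a). Alternating these gives a strictly decreasing sequence of target errors all the way from $(\leq t)$ down to $(<t')$.

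Your direct argument is in fact cleaner for the ``Hence'' corollary, where both sides use $<t$: there even your boundary subcase $\mu'=\mu$ goes through without any extra assumption, since $fw_{<t'}/t'=fw_{<t}/t$ forces $fw_{<t'}=(t'/t)\,fw_{<t}<fw_{<t}$. Where your write-up does not quite cover what the paper's proof covers is in the handling of the lemma's literal hypothesis with $\leq t$: the ``continuous-hashing convention'' you invoke concerns hash values and priorities being collision-free, not that fixed thresholds avoid input weights, and the thresholds $t_\ell=t_{\max}/2^\ell$ used in the surrounding section are deterministic and may perfectly well coincide with some $w_i$. The paper's atomic-step decomposition is precisely what absorbs the $\leq t\to <t$ transition without any genericity assumption; your global case-split, as written, outsources that one step to an assumption that is not actually in force.
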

\begin{proof} We will divide the decrease from $t$ to $t'$ into 
a series of atomic decreases.
The first atomic ``decrease'' 
is from $fw_{\leq t}$ to $fw_{<t}$. 
This makes no
difference unless there are weights equal to $t$ so
that $fw_{<t}<fw_{\leq t}$. Assume this is the case and suppose 
$\wp(fw_{< t}/t,\delta_{< t})=\wp(fw_{\leq t}/t,\delta_{\leq t})$. 
Since $fw_{\leq t}/t<fw_{<t}/t$, it
follows directly from 
(b) that 
$\delta_{<t}fw_{<t}/t<\delta_{\leq t}w_{\leq t}/t$,
hence that $\delta_{<t}fw_{<t}<\delta_{\leq t}w_{\leq t}$.

The other atomic decrease we consider is from $fw_{<t}$ to $fw_{\leq t'}$
where $t'<t$ and with no weights in $(t',t)$, hence with 
$fw_{\leq t'}=fw_{<t}$. Suppose 
$\wp(fw_{\leq t'}/t',\delta_{\leq t'})=\wp(fw_{<t}/t,\delta_{< t})$.
Since $t'<t$, $fw_{\leq t'}/t'>fw_{<t}/t$, so by 
(a),  $\delta_{\leq t'}<\delta_{<t'}$.
It follows that $\delta_{\leq t'}fw_{\leq t'}<\delta_{<t}fw_{<t}$. Alternating between these two atomic decreases, we can implement an arbitrary decrease
in the threshold as required for the lemma.
\end{proof}
Let 
\[\delta=\delta(w_{<t}/t,P/\ell^2)\]
By Lemma \ref{lem:decr-t}, 
since $t>t_{\ell}$, we have
$\delta_{\ell}f w_{<t_{\ell}}<\delta f w_{<t}$, so
\[
f\widehat w^{t_{\ell}}_{<t}\leq f w_{<t}+\delta_{\ell}fw_{<t_{\ell}}
< f w_{<t}+\delta f w_{<t_{\ell}}.\]
We thus conclude
\begin{equation}\label{eq:large-error}
\forall t\in (0,t_{\max}): f\widehat w^{\,t}_{<t}<2f\widehat w^{t_{\ell}}_{<t}<2(1+\delta)f w_{<t}\textnormal,
\end{equation}
This completes the proof of Theorem \ref{thm:tail}.
\end{proof}

\subsection{Upper bounding the upper bound}\label{sec:upper-upper}
Theorem \ref{thm:tail} uses the threshold upper bound $t_{\max}$ which
is a value that depends on all the input weights, and these are not
known if we only have a sample. As described in Section
\ref{sec:main-results}, to get confidence bounds out of Theorem
\ref{thm:tail}, it suffices if we based on our sample can compute a
probabilistic upper bound $\ol t_{\max}^\tau$ on the upper bound
$t_{\max}$ such that
\begin{equation}\label{eq:upper-tmax}
\ol t_{\max}^{\,\tau}\geq t_{\max}
\end{equation}
with probability at least $1-P$.
For better confidence lower bounds, we want $\ol t_{\max}$ to be small.
\begin{theorem}\label{thm:good-upper}
Define $\delta^\downarrow_{\leq\tau}$ and $\delta^\uparrow_{\leq\tau}$
such that
$\wp(k_{\leq\tau}/(1-\delta^\downarrow_{\leq\tau}),\,\delta^\downarrow_{\leq\tau}))=P$
and
$\wp(k_{\leq\tau}/(1+\delta^\uparrow_{\leq\tau}),\,\delta^\uparrow_{\leq\tau}))=P$.
Let
\[\ol{t}_{\max}^{\,\tau}=\frac{1+\delta^\uparrow_{\leq\tau}}{1-\delta^\downarrow_{\leq\tau}}\,\tau.\]
Then $\Pr[\ol t^{\,\tau}_{\max}<t_{\max}]\leq P$
\end{theorem}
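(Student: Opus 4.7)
The plan is to prove the contrapositive: suppose the bad event $\ol{t}_{\max}^{\,\tau} < t_{\max}$ occurs, and extract a concentration failure whose probability $\wp$ bounds by $P$. Since $\ol{t}_{\max}^{\,\tau} \geq \tau$, the bad event forces $\tau < t_{\max}$, and unpacking the definition of $\ol{t}_{\max}^{\,\tau}$ further gives $(1+\delta^\uparrow_{\leq\tau})\tau < (1-\delta^\downarrow_{\leq\tau})t_{\max} \leq t_{\max}$. I would then case-split according to whether $\tau \geq t^*$ or $\tau < t^*$, applying Lemmas \ref{lem:large-t} and \ref{lem:small-t} respectively.

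In the case $\tau \geq t^*$, Lemma \ref{lem:large-t} gives the deterministic inequality $(1+\delta(w_{\leq\tau}/t_{\max},P))\tau \geq t_{\max}$. Combined with $(1+\delta^\uparrow_{\leq\tau})\tau < t_{\max}$, this forces $\delta^\uparrow_{\leq\tau} < \delta(w_{\leq\tau}/t_{\max},P)$. Using the equalities $\wp(k_{\leq\tau}/(1+\delta^\uparrow_{\leq\tau}),\delta^\uparrow_{\leq\tau}) = P = \wp(w_{\leq\tau}/t_{\max}, \delta(w_{\leq\tau}/t_{\max},P))$ together with the strict monotonicity of $\wp$ in both arguments (Definition \ref{def:p}(a)), I deduce $k_{\leq\tau}/(1+\delta^\uparrow_{\leq\tau}) > w_{\leq\tau}/t_{\max}$. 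Interpreting $w_{\leq\tau}/t_{\max}$ as the expected count of items with weight $\leq \tau$ sampled at the fixed threshold $t_{\max}$, the inequality forces the observed $k_{\leq\tau}$ to exceed this expectation by a factor more than $(1+\delta^\uparrow_{\leq\tau})$, which $\wp$ bounds by $P$.

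The case $\tau < t^*$ is handled symmetrically via Lemma \ref{lem:small-t}, extracting a failure through the $(1-\delta^\downarrow_{\leq\tau})$ factor rather than $(1+\delta^\uparrow_{\leq\tau})$. The two cases partition the sample space, and in each the extracted deviation corresponds to a single one-sided tail of the same threshold-sampling quantity at a reference threshold tied to $t_{\max}$, so the combined probability remains $P$ rather than $2P$.

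The main obstacle is precisely this probability accounting: the theorem asserts $\leq P$ even though two distinct quantities $\delta^\uparrow_{\leq\tau}$ and $\delta^\downarrow_{\leq\tau}$ enter $\ol{t}_{\max}^{\,\tau}$. The key is to argue that the bad event only triggers a one-sided tail of $\wp$ at the appropriate reference threshold ($t_{\max}$, or an auxiliary threshold derived from $t^*$), and that each case of the partition aligns to the same tail. Additional care will be needed at the boundary $\tau = t^*$ and in correctly translating $\delta$-inequalities into $\mu$-inequalities via $\wp$'s monotonicity.
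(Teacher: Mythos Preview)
Your plan diverges from the paper's and has a real gap.

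The paper does not split on $\tau$ versus $t^*$, nor does it invoke Lemmas~\ref{lem:large-t} or~\ref{lem:small-t}. It factors through the unknown $w_{\leq\tau}$ via two new lemmas. Lemma~\ref{lem:tmax-nice} shows \emph{deterministically}, for every $t$, that $t_{\max}\leq (1+\delta^\uparrow_{\leq t})\,w_{\leq t}/k_{\leq t}$; applied at $t=\tau$ this absorbs the factor $1+\delta^\uparrow_{\leq\tau}$ with zero probability cost. Lemma~\ref{lem:ulw} then shows $w_{\leq\tau}\leq \tau k_{\leq\tau}/(1-\delta^\downarrow_{\leq\tau})$ except with probability $P$, and this is where the single $P$ is spent. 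The crucial device there is to reduce the event involving the random $\tau$ to a \emph{single fixed-threshold} event: one takes the largest deterministic $t^+$ with $t^+ k_{\leq t^+}/(1-\delta^\downarrow_{\leq t^+})\leq w_{\leq t^+}$ and proves $\Pr[\tau\leq t^+]\leq P$; the bad event is contained in $\{\tau\leq t^+\}$. Chaining the two lemmas gives $t_{\max}\leq (1+\delta^\uparrow_{\leq\tau})\,w_{\leq\tau}/k_{\leq\tau}\leq \ol t_{\max}^{\,\tau}$ on the good event.

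Your argument lacks precisely this reduction to a fixed threshold. In the $\tau\geq t^*$ case you arrive at $k_{\leq\tau}>(1+\delta^\uparrow_{\leq\tau})\,w_{\leq\tau}/t_{\max}$ and call it a concentration failure that $\wp$ bounds by $P$. But $\wp$ only controls $\Pr[|X-\mu|>\delta\mu]$ for \emph{fixed} $\mu,\delta$; here both $w_{\leq\tau}/t_{\max}$ and $\delta^\uparrow_{\leq\tau}$ depend on the random $\tau$, so you are implicitly unioning over all possible $\tau$ with no supremum argument to collapse that union. Worse, $k_{\leq\tau}$ is not the count $|I^{\,t_{\max}}_{\leq\tau}|$ of items with weight $\leq\tau$ sampled at threshold $t_{\max}$: it equals $|I^{\,\tau}_{\leq\tau}|$, the count at threshold $\tau$, and on the bad event $\tau<t_{\max}$ one only has $k_{\leq\tau}\geq |I^{\,t_{\max}}_{\leq\tau}|$, which is the wrong direction for the deviation you want. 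The $\tau<t^*$ branch has further trouble: Lemma~\ref{lem:small-t} needs $\tau\geq t_{\min}$ and compares $\tau$ only to $t^*$, not to $t_{\max}$, so the claimed symmetry with the other branch is not actually there.
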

\begin{proof}
Our first step will be to compute
a probabilistic upper bound $\ol{w}_{\leq\tau}^{\,\tau}$ on $w_{\leq\tau}$
such that 
\begin{equation}\label{eq:upper-weight}
\ol{w}_{\leq\tau}^{\,\tau}\geq w_{\leq\tau}.
\end{equation}
with probability at least $1-P$.  We are going to define
$\ol{w}_{\leq t}^{\,t}$ for any possible threshold $t$ as a function of
only the values of $t$ and $k_{\leq t}$. 
We define $\ol \mu_{\leq t}^{\,t}=k_{\leq t}/(1-\delta^\downarrow_{\leq t})$,
and 
\[\ol w_{\leq t}=t\, \ol \mu_{\leq t}^{\,t}=t\,k_{\leq t}/(1-\delta^\downarrow_{\leq t}).\]  
The lemma below states that $\ol w_{\leq\tau}^{\,\tau}$ does give us the desired probabilistic upper bound on $w_{\leq\tau}$.
\begin{lemma}\label{lem:ulw} For the random priority threshold $\tau$, 
the probability that $w_{\leq\tau}>\ol w_{\leq\tau}^{\,\tau}$ is at most
$P$, so \req{eq:upper-weight} holds true with probability at least $1-P$.
\end{lemma} 
\begin{proof} 
For any given set of input weights consider a threshold $t$
such that $\ol w_{\leq t}^{\,t}\leq w_{\leq t}$. We claim 
that the random priority
threshold $\tau$ is expected no smaller than $t$.
Note that $\tau< t$ if
and only if $|I^{t}_{\leq t}|\leq k_{\leq t}$.
Since $\ol w_{\leq t}^{\,t}\leq w_{\leq t}$, we have $\E[|I^{t}_{\leq t}|]=
w_{\leq t}/t\geq \ol\mu_{\leq t}^{\,t}$. Moreover, $k_{\leq t}=(1-\delta^\downarrow_{\leq t})\ol\mu_{\leq t}^{\,t}$, so
\begin{align*}
\Pr[\tau\leq t]&=\Pr[|I^{\,t}_{\leq t}|\leq
  k_{\leq t}]=\Pr[|I^{t}_{\leq t}|\leq
  (1-\delta^\downarrow_{\leq t})\ol\mu_{\leq t}^{\,t}]\\&\leq\wp(\ol
\mu_{\leq t},\delta^\downarrow_{\leq t})=\wp(k_{\leq t}/(1-\delta^\downarrow_{\leq t}),\delta^\downarrow_{\leq t})\leq P.
\end{align*}
Let $t^+$ be the maximal value such that $\ol w_{\leq t^+}^{t^+}\leq
w_{\leq t^+}$. The probability that $\tau\leq t^+$ is at most $P$, and if
$\tau>t^+$, then $\ol w_{\leq\tau}^{\,\tau}> w_{\leq\tau}$.
If the maximal value $t^+$ does not exist, we define instead $t^+$ as the
limit where $\ol w_{\leq t}^{\,t}\leq w_{\leq t}$ for $t< t+$ while 
$\ol w_{\leq t^+}^{t^+}>w_{\leq t^+}$. The probability that $\tau< t^+$ is at most $P$,
and if  $\tau\geq t^+$, we again have $\ol w_{\leq\tau}^{\,\tau}> w_{\leq\tau}$.
\end{proof}
\begin{lemma}\label{lem:tmax-nice} For any threshold $t$, 
$t_{\max}\leq (1+\delta^\uparrow_{\leq t})w_{\leq t}/k_{\leq t}$.
\end{lemma}
\begin{proof}
Consider any two thresholds $t$ and $T$. Then
$|I_{\leq t}^T|$ is the number of weights below $t$ with threshold above $T$,
and $\tau\geq T$ implies $|I_{\leq t}^T|\geq k_{\leq t}$.
Let $T=(1+\delta^\uparrow_{\leq t})w_{\leq t}/k_{\leq t}$. Then
$\E[|I^T_{\leq t}|]\leq w_{\leq t}/T=k_{\leq t}/(1+\delta^\uparrow_{\leq t})$ with
equality if $T\geq t$. It follows that
\[\Pr[\tau\geq T]\leq \Pr[|I^T_{\leq t}|\geq k_{\leq t}]\leq P.\]
Hence $T\geq t_{\max}$.
\end{proof}
Assuming \req{eq:upper-weight}, with $t=\tau$, we get
\[
t_{\max}\leq (1+\delta^\uparrow_{\leq\tau})w_{\leq\tau}/k_{\leq\tau}
\leq (1+\delta^\uparrow_{\leq\tau})\ol w_{\leq\tau}/k_{\leq\tau}
=(1+\delta^\uparrow_{\leq\tau})\tau/(1-\delta^\downarrow_{\leq t}).\]
By Lemma \ref{lem:ulw}, \req{eq:upper-weight} holds true with
probability $1-P$. This completes the proof of Theorem \ref{thm:good-upper}.
\end{proof}

\subsection{What to trust, and what to expect}\label{sec:heavy-tail}
All our theorems about confidence intervals are trustworthy in the sense that they hold true for any set of
input weights. We will now discuss what to expect if the
input follows a reasonable distribution. As we shall formalize below,
we expect a typical priority sample to 
consist of only a few large weights above the
priority threshold, and a majority of small weights that are significantly
smaller than the priority threshold.
This will imply that our estimated threshold upper bound $\ol t_{\max}^{\,\tau}$ is
very close to the priority threshold $\tau$.

This view has consequences for what we would consider worth
optimizing for in our confidence intervals, e.g., one could try getting better confidence intervals
for cases where the sampled items have weight below but close to the threshold
(information that is currently ignored, and not even contained in the
sample), but since we do not expect many such items, we do not
optimize for this case. 

As our formal model, we assume that each weight $w_i$ is drawn independently
from a Pareto distributions, that is, for a positive real parameter $\alpha=\Omega(1)$, and
any real $x\geq 1$, we have the survival function
\begin{equation}\label{eq:survival}
\ol F(x)=\Pr[w_i\geq x]=1/x^{\alpha}.
\end{equation}
Then all weights are at least $1$.  For $\alpha\rightarrow\infty$, all
weights are 1. As $\alpha$ decreases, we get more heavy weights. The
mean is infinite for $\alpha\leq 1$, and the variance is infinite for
$\alpha\leq 2$.  The probability density function $f$ is the
derivative of $1-\ol F(x)$, so
\begin{equation}\label{eq:density}
f(x)=\alpha/x^{\alpha+1}.
\end{equation}
We are going to use $n$ such input weights as input to a priority
sample of size $k$, where $1\ll k\ll n$. The priority sampling events
assigns priorities $q_i=w_i/r_i$, $r_i\in U(0,1)$ to each item, and in
the analysis, we will study these weights and priorities relative to
any given threshold $t$. For any such given threshold $t$, we assume
that our error probability function $\wp$ from \req{eq:p} holds for
the number of samples $i$, $q_i>t$, for the combined event where we
first assign the weights $w_i$ independently and second assign the
$h_i$ and hence the $q_i$ based on a hash of each $i$.

We assume that the number of priority samples $k$ is so large that
for some small error $\eps=o(1)$ and target error probability $P$, we have
\begin{equation}\label{eq:large-k}
\wp(\Omega(k),O(\eps))=o(P).
\end{equation}
The basic idea is that the number $k$ of samples is so large,
that we do not expect significant errors for the sample as a whole. However,
we might still have significant errors in estimation of small
subsets. More precisely, our analysis will imply the following result.
\begin{theorem}\label{thm:expected-behavior}
Let $t_k$ be the threshold leading to an expected number of $k$ samples, and
let $\tau$ be the actual priority threshold, that is,
the $(k+1)$th largest priority when all random choices are made.
Then, with probability $1-o(P)$, we have $\tau=(1\pm O(\eps))t_k=(1\pm o(1))t_k$. Moreover,
we have $k_{\leq\tau}=\Omega(k)$ small weight samples, most
of which are from weights below $\tau/2$.
\end{theorem}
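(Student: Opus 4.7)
My plan is first to pin $\tau$ tightly around a deterministic value and then read off the remaining claims from the Pareto tail. Using the density \req{eq:density}, the expected sample count at threshold $t$, jointly over the random weights and hashes, is
\[g(t)\;:=\;n\,\E_w[\min(1,w/t)]\;=\;\frac{n(t^{-\alpha}-\alpha/t)}{1-\alpha}\qquad(\alpha\neq 1),\]
which scales as $\Theta(n/t)$ for $\alpha>1$ and $\Theta(n/t^\alpha)$ for $\alpha<1$. Thus $g(t_k)=k$ pins $t_k$ down up to constants, and a Taylor expansion of $g$ at $t_k$ yields $g((1\pm\eps)t_k)=(1\mp c_\alpha\eps)k+O(\eps^2 k)$ with $c_\alpha=\Omega(1)$ for $\alpha=\Omega(1)$ (explicitly $c_\alpha=1$ for $\alpha\geq 1$ and $c_\alpha=\alpha$ for $\alpha<1$). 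Since $|I^t|=\sum_i[q_i>t]$ has joint-expectation $g(t)$ and $\wp$ is assumed to cover this joint experiment, applying \req{eq:large-k} bounds the probability that $\bigl||I^{(1\pm\eps)t_k}|-g((1\pm\eps)t_k)\bigr|$ exceeds $(c_\alpha/2)\eps k$ by $o(P)$. A union bound then sandwiches $\tau\in[(1-\eps)t_k,(1+\eps)t_k]$, establishing the first claim.

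\textbf{Light samples and their weights.} Conditional on this event, $k-k_{\leq\tau}=|\{i:w_i>\tau\}|\leq|\{i:w_i>(1-\eps)t_k\}|$, a sum of independent indicators of mean $n/((1-\eps)t_k)^\alpha=(1+o(1))n/t_k^\alpha$. Plugging in the value of $t_k$: $n/t_k^\alpha=O(k(k/n)^{\alpha-1})=o(k)$ for $\alpha>1$ and $n/t_k^\alpha=(1-\alpha)(1+o(1))k$ for $\alpha<1$; in either case the mean is at most $(1-\alpha/2)k$ under $\alpha=\Omega(1)$, and another invocation of \req{eq:large-k} concentrates the indicator sum about its mean with failure $o(P)$. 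Hence $k_{\leq\tau}\geq(\alpha/2)k=\Omega(k)$. For the last claim, a direct integration gives the joint-expected number of sampled items with weight in $(\tau/2,\tau]$, and comparing with the joint-expected number of light samples $L=k-n/t_k^\alpha$ shows that in expectation at most a fraction $1-2^{\alpha-1}+o(1)$ of the light samples come from $(\tau/2,\tau]$. Since $\alpha=\Omega(1)$ forces $2^{\alpha-1}$ strictly above $1/2$, one final Chernoff bound upgrades this to the high-probability statement that strictly more than half of the light samples come from weights $\leq\tau/2$.

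\textbf{Main obstacle.} The delicate step is Step 1: one must work with the joint randomness of weights and hashes, linearize $g$ at $t_k$, and invoke the abstract assumption \req{eq:large-k} correctly, splitting by the cases $\alpha>1$ and $\alpha<1$ (the $\alpha=1$ case requires a logarithmic variant). Everything afterward is routine Pareto bookkeeping. A minor but genuine subtlety is that $k_{\leq\tau}$ and the $(\tau/2,\tau]$ count are random through $\tau$, but the already-established event $\tau=(1\pm\eps)t_k$ perturbs every relevant Pareto integral by at most a $(1\pm O(\eps))$ factor, so I may freely substitute $t_k$ for $\tau$ in all expectation computations.
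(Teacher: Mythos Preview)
Your proposal is correct and follows essentially the same route as the paper: compute the Pareto integral for the expected sample count $g(t)=n(t^{-\alpha}-\alpha/t)/(1-\alpha)$, apply the abstract concentration assumption \req{eq:large-k} at the two deterministic thresholds $(1\pm\eps)t_k$ to sandwich $\tau$, and then do ratio bookkeeping on the Pareto integrals to get $k_{\leq\tau}=\Omega(k)$ and the ``below $\tau/2$'' claim. The only cosmetic differences are that the paper defines $t_k^\pm$ implicitly via $g(t_k^\pm)=k/(1\pm\eps)$ rather than Taylor-expanding, bounds the small-weight sample count directly rather than its complement, and handles the randomness of $\tau$ in the final ratio via explicit over/under-sampled sets at $t_k^\pm$ rather than your perturbation remark; your ``$(1-\alpha/2)k$'' bound and ``Chernoff'' reference should be tightened (the former is vacuous for $\alpha>2$, and the latter should be \req{eq:large-k}), but these are wording issues, not gaps.
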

\begin{proof}
As a first simple observation, since all weights at least $1$, the 
expected number of samples with threshold $t$ is at least $n/t$. It
follows that $k\geq n/t_k$, hence that
\begin{equation}\label{tk-large}
t_k\geq n/k=\omega(1).
\end{equation}
The following analysis is for an arbitrary threshold $t$, not just
$t=t_k$. We want to study the expected number of
large weights $w_i>t$ that are sampled for sure, and the expected
number of small weight samples $w_i\leq t< q_i$. By linearity of expectation,
this is $n$ times the probability of these events for any given item $i$.
By \req{eq:survival}, $\Pr[w_i>t]=1/t^\alpha$. Using the probability density
function \req{eq:density}, we get
\begin{align}
\Pr[w_i\leq t<q_i]&=\int_{1}^{\,t} f(x)\cdot  x/t\; dx\\
&=\alpha/t\cdot  \int_{1}^{\,t} 1/x^\alpha\; dx\label{eq:int}\\
&=\frac{\alpha/t}{1-\alpha}\left[x^{1-\alpha}\right]_{1}^{\,t}\\
&=\frac{\alpha}{1-\alpha}(t^{1-\alpha}-1)/t \label{eq:not-1/t}\\
&=\frac{\alpha}{1-\alpha}(1/t^{\alpha}-1/t)\label{eq:not-1}
\end{align}
This should be compared with the probability
of a large weight sample $w_i\geq t$ which was $1/t^\alpha$. For $\alpha<1$,
the low weight sample probability is $\Omega(1/t^\alpha)$, and
for $\alpha>1/2$ we start expecting more low weight samples than
large weights. For $\alpha>1$, the sampled weights dominate
in that we expect more than $1/t$ if them.

Above we assumed $\alpha\neq 1$. For $\alpha=1$, continuing from
\req{eq:int}, we get 
\begin{align*}
\Pr[w_i\leq t, q_i>t]&=1/t\cdot [\ln x]_{1}^{\,t}\\
&=1/t\cdot\ln t
\end{align*}
which means that the small weight samples are dominant by a factor of
$\ln t$ for $\alpha=1$. For continuity, it is easily verified that \req{eq:not-1} also
converges to $(\ln t)/t$ for $\alpha\rightarrow 1$. For simplicity, we
assume below that $\alpha\neq 1$.

The total expected number of samples is 
\[s_\alpha(t)=\frac{1/t^{\alpha}-\alpha/t}{1-\alpha}\]
By definition, $s_\alpha(t_k)=k$. Define $t_k^+>t_k$ such that
$s_\alpha(t_k^+)=k/(1+\eps)$ where $\eps=o(1)$ is the error from
\req{eq:large-k}. To get priority threshold $\tau\geq t^+_k$, we need
at least $k+1$ samples $t^+_k$. By \req{eq:large-k}, this happens
with probability $o(P)$. If $\alpha<1$, then 
$t_k^+<(1+\eps)^{1/\alpha}\,t_k$, and if $\alpha>0$, 
$t_k^+<(1+\eps)t_k$.
Since $\alpha=\Omega(1)$, we conclude in both cases that 
\[t_k^+<(1+O(\eps))t_k.\]
A symmetric argument shows that $\tau\geq (1-O(\eps))t_k$ with probability 
$1-o(P)$. Thus
$\tau=(1\pm O(\eps))t_k$ with probability $1-o(P)$.

Next we need to argue that with probability $1-o(P)$,
the number $k_{\leq\tau}$ of sampled small weights in the priority sample
is $\Omega(k)$. We may assume that $\tau\leq t^+_k$, so $k_{\leq\tau}$
is at least as big as the number of
sampled small weights with threshold $t^+_k$. By definition of $t^+_k$,
the expected number of
samples with threshold $t^+_k$ is $k/(1+\eps)$, and we know for any
given threshold, that the expected number of small weight samples is
at least a constant fraction of the expected total, so we expect
$\Omega(k)$ small weight samples with threshold $t^+_k$. By
\req{eq:large-k}, this implies that their actual number is
$\Omega(k)$ with probability $1-o(P)$. Thus $k_{\leq\tau}=\Omega(k)$
with probability $1-o(P)$.

Finally, among the sampled small weights $w_i\leq\tau<q_i$, we want
to see what fraction is below $\tau/2$. As usual, in our analysis,
we first consider given thresholds rather than the variable priority threshold.
Generally, for given thresholds $t_0\leq t_1$, and any given value of 
$w_i\leq t_0$, 
\[\Pr[q_i>t_1|w_i]=\Pr[q_i>t_0| w_i]\cdot t_0/t_1.\]
Hence
\[\Pr[w_i\leq t_0,q_i>t_1]=\Pr[w_i\leq t_0<q_0]\cdot t_0/t_1=
\frac{\alpha}{1-\alpha}(t_0^{1-\alpha}-1)/t_1.\]
For $\alpha=\Omega(1)$ and $t_1\geq t_0=\omega(1)$, we get 
\[\Pr[w_i\leq t_0,q_i>t_1]>(t_0/t_1)^{1-\Omega(1)}\cdot \Pr[w_i\leq t_1,q_i>t_1].\]
We know that with probability $1-o(P)$, that $t_k^-\leq \tau\leq t_k^+$
where $t_k^+=(1+o(1)t_k^-$. A weight $w_i$ is in the priority sample if $q_i>\tau$.
We say weight $w_i$ is over-sampled if $q_i\geq t_k^-$
and under-sampled if $q_i\geq t_k^+$. The over-sampled weights $w_i\leq t^+_k$ include all weights $w_i\leq \tau$ in the priority
sample, and the under-sampled weights $w_i\leq t^-_k/2$
are all included among the weights $w_i\leq \tau/2$ in the priority sample. 

The expected number of over-sampled weights $w_i\leq t^+_k$ is 
\begin{equation}\label{eq:medium}
n\cdot \Pr[w_i\leq t_k^+,q_i>t_k^-]\leq n\cdot 
\frac{\alpha}{1-\alpha}((t_k^+)^{1-\alpha}-1)/t_k^-.
\end{equation}
while the expected number of under-sampled weights $w_i\leq t^-_k/2$
is 
\begin{equation}\label{eq:tiny}
n\cdot \Pr[w_i\leq t_k^+,q_i>t_k^-]= n\cdot 
\frac{\alpha}{1-\alpha}((t_k^-/2)^{1-\alpha}-1)/t_k^+.
\end{equation}
With $\alpha=\Omega(1)$, \req{eq:tiny} is within a factor
$1/2+\Omega(1)$ of \req{eq:medium}. Moreover, from our analysis of
$k_{\leq\tau}$, we know that \req{eq:medium} and hence \req{eq:tiny} is
$\Omega(k)$. It follows from \req{eq:large-k} that with probability
$1-o(P)$, the expected bounds \req{eq:medium} and \req{eq:tiny} end up both 
satisfied within a factor $1\pm o(1)$.
Then, among the priority sampled small weights $w_i\leq\tau$, at
least half have weight below $\tau/2$.
\end{proof}
We now return to our confidence bounds for large errors.
By Theorem \ref{thm:expected-behavior}, with 
probability $1-o(P)$, we get $k_{\leq \tau}=\Omega(k)$. Hence 
by \req{eq:large-k}, we get 
$\delta^\uparrow_{\leq\tau},\delta^\downarrow_{\leq\tau}=O(\eps)=o(1)$
in Theorem \ref{thm:good-upper}, 
so
\[\ol{t}_{\max}^{\,\tau}=(1+o(1))\tau\]
and then $\ol{\ell}=1$ in \req{eq:oll}.

\subsection{Histogram similarity}\label{sec:histogram}

We will now discuss estimators for the similarity of weighted sets.
First consider the simple case where each key has a unique weight. 
The similarity is then just the total weight of the intersection divided by the  weight
of the union, and we estimate these two quantities independently.

As in the bottom-$k$ sample for unweighted items, we note that given the size-$k$ priority
sample of two sets $A$ and $B$, we can easily construct the size-$k$
priority sample of their union, and identify which of these samples
come from the intersection. Our analysis for subset sums now applies
directly.

In the case of histogram similarity, it is
natural to allow the same item to have different weights in different
sets.  More specifically, allowing zero weights, every possible item
has a weight in each set.  For the similarity we take the sum of the
minimum weight for each item, and divide it by the sum of the maximum
weight for each item.  
Formally, we are considering two sets $A$ and $B$. Item $i$ has
weight $w_i^{\,A}$ in $A$ and weight $w_i^{\,B}$ in $B$.  Let
$w_i^{\,\max}=\max\{w_i^{\,A},w_i^{\,B}\}$ and
$w_i^{\,\min}=\min\{w_i^{\,A},w_i^{\,B}\}$. The histogram similarity is $w^{\,\min}/w^{\,\max}=(\sum_i w_i^{\,\min})/(\sum_i w_i^{\,\max})$.

This would seem a perfect application of our fractional subsets
with $w_i=w_i^{\,\max}$ and $f_i=w_i^{\,\min}/w_i^{\,\max}$. The issue
is as follows. From our priority samples over the $w_i^{\,A}$ and $w_i^{\,B}$, 
we can easily get the priority sample for the $w_i=w_i^{\,\max}$. However,
for the items $i$ sampled, we would typically not have a sample with
$w_i^{\,\min}$, and then we cannot compute $f_i$.

Our solution is to keep the instances of an item $i$ in $A$ and $B$
separate as twins $i^{\,A}$ and $i^{\,B}$ with priorities $q_i^{\,A}=w_i^{\,A}/h_i$ and
$q_i^{\,B}=w_i^{\,B}/h_i$. Note that it is the same hash value
$h_i$ we use to determine these two priorities. If $w_i^{\,A}=w_i^{\,B}$, we get
$q_i^{\,A}=q_i^{\,B}$, and then we break the tie in favor of $i^{\,A}$. 
The priority sample for the union $A\cup B$
consists of the split items with the $k$ highest priorities, and the priority threshold $\tau$ is the
$k+1$ biggest among all priorities. Estimation is done as usual: for $C\in\{A,B\}$, if $i^{\,C}$ is
sampled, $\widehat w_i^{\,C}=\max\{w_i^{\,C},\tau\}$. The important
point here is the interpretation of the results. If $w_i^{\,A}\geq w_i^{\,B}$,
then the priority of $i^{\,A}$ is higher than that of $i^{\,B}$. Thus, in our sample,
when we see an item $i^{\,{C}}$, $C\in \{A,B\}$,  we count it for the 
union $\widehat w^{\,\max}$ if it is not preceded by its twin; otherwise we
count it for the intersection $\widehat w^{\,\min}$.

The resulting estimators $\widehat w^{\,\min}$ and $\widehat w^{\,\max}$ will
no longer be unbiased even with truly random hashing. To see this, note that
with sample size $k=1$, we always get $\widehat w^{\,\min}=0.$
However, for our concentration bounds, we only lose a constant factor.
The point is that the current analysis is
using union bounds over threshold sampling events, using
the fact that each hash value $h_i$ contributes at most 1 to
the number of items with priorities above a given threshold $t$. Now
$h_i$ affects at most $2$ twins, but this is OK since all we really need is that
the contribution of each random variable is bounded by a constant. The
only effect on Theorem \ref{thm:main} is that we replace the
relative error $6\,\delta(f w_{<\tau}/(3\tau),P)$ with
$6\,\delta(f w_{<\tau}/(6\tau),P)$.


\begin{thebibliography}{10}

\bibitem{AFT98}
R.~Adler, R.~Feldman, and M.~Taqqu.
\newblock {\em A Practical Guide to Heavy Tails}.
\newblock Birkhauser, 1998.

\bibitem{BHP09}
Y.~Bachrach, R.~Herbrich, and E.~Porat.
\newblock Sketching algorithms for approximating rank correlations in
  collaborative filtering systems.
\newblock In {\em Proc. 16th SPIRE}, pages 344--352, 2009.

\bibitem{BP10}
Y.~Bachrach and E.~Porat.
\newblock Fast pseudo-random fingerprints.
\newblock {\em CoRR}, abs/1009.5791, 2010.

\bibitem{BPR09}
Y.~Bachrach, E.~Porat, and J.~S. Rosenschein.
\newblock Sketching techniques for collaborative filtering.
\newblock In {\em Proc. 21st IJCAI}, pages 2016--2021, 2009.

\bibitem{BJKST02}
Z.~Bar-Yossef, T.~S. Jayram, R.~Kumar, D.~Sivakumar, and L.~Trevisan.
\newblock Counting distinct elements in a data stream.
\newblock In {\em International Workshop on Randomization and Approximation
  Techniques in Computer Science (RANDOM)}, pages 1--10, 2002.

\bibitem{Broder97onthe}
A.~Z. Broder.
\newblock On the resemblance and containment of documents.
\newblock In {\em \em Proc. Compression and Complexity of Sequences
  (SEQUENCES)}, pages 21--29, 1997.

\bibitem{Broder00}
A.~Z. Broder.
\newblock Identifying and filtering near-duplicate documents.
\newblock In {\em Proc. 11th CPM}, pages 1--10, 2000.

\bibitem{broder98minwise}
A.~Z. Broder, M.~Charikar, A.~M. Frieze, and M.~Mitzenmacher.
\newblock Min-wise independent permutations.
\newblock {\em Journal of Computer and System Sciences}, 60(3):630--659, 2000.
\newblock Announced at STOC'98.

\bibitem{Broder97minwise}
A.~Z. Broder, S.~C. Glassman, M.~S. Manasse, and G.~Zweig.
\newblock Syntactic clustering of the web.
\newblock {\em Computer Networks}, 29:1157--1166, 1997.

\bibitem{CDFGIMUY01}
E.~Cohen, M.~Datar, S.~Fujiwara, A.~Gionis, P.~Indyk, R.~Motwani, J.~D. Ullman,
  and C.~Yang.
\newblock Finding interesting associations without support pruning.
\newblock {\em IEEE Trans. Knowl. Data Eng.}, 13(1):64--78, 2001.

\bibitem{CDKLT11:varopt}
E.~Cohen, N.~Duffield, H.~Kaplan, C.~Lund, and M.~Thorup.
\newblock Efficient stream sampling for variance-optimal estimation of subset
  sums.
\newblock {\em SIAM Journal on Computing}, 40(5):1402--1431, 2011.
\newblock Announced at SODA'09.

\bibitem{CK07}
E.~Cohen and H.~Kaplan.
\newblock Summarizing data using bottom-k sketches.
\newblock In {\em Proc. 26th PODC}, pages 225--234, 2007.

\bibitem{DatarM02estimatingrarity}
M.~Datar and S.~Muthukrishnan.
\newblock Estimating rarity and similarity over data stream windows.
\newblock In {\em Proc. 10th ESA}, pages 323--334, 2002.

\bibitem{Die96}
M.~Dietzfelbinger.
\newblock Universal hashing and k-wise independent random variables via integer
  arithmetic without primes.
\newblock In {\em Proc. 13th STACS}, pages 569--580, 1996.

\bibitem{DGMP92}
M.~Dietzfelbinger, J.~Gil, Y.~Matias, and N.~Pippenger.
\newblock Polynomial hash functions are reliable (extended abstract).
\newblock In {\em Proc. 19th ICALP}, pages 235--246, 1992.

\bibitem{DLT01}
N.~Duffield, C.~Lund, and M.~Thorup.
\newblock Charging from sampled network usage.
\newblock In {\em Proc. 1st ACM SIGCOMM Internet Measurement Workshop}, pages
  245--256, 2001.

\bibitem{DLT05}
N.~Duffield, C.~Lund, and M.~Thorup.
\newblock Learn more, sample less: control of volume and variance in network
  measurements.
\newblock {\em IEEE Transactions on Information Theory}, 51(5):1756--1775,
  2005.

\bibitem{DLT07:priority}
N.~Duffield, C.~Lund, and M.~Thorup.
\newblock Priority sampling for estimation of arbitrary subset sums.
\newblock {\em J. ACM}, 54(6):Article 32, 2007.
\newblock Announced at SIGMETRICS'04.

\bibitem{FPS12}
G.~Feigenblat, E.~Porat, and A.~Shiftan.
\newblock Exponential space improvement for minwise based algorithms.
\newblock In {\em Proc. FSTTCS}, pages 70--85, 2012.

\bibitem{HHW97}
J.~M. Hellerstein, P.~J. Haas, and H.~J. Wang.
\newblock Online aggregation.
\newblock In {\em Proc. SIGMOD}, pages 171--182, 1997.

\bibitem{Henzinger06}
M.~R. Henzinger.
\newblock Finding near-duplicate web pages: a large-scale evaluation of
  algorithms.
\newblock In {\em Proc. 29th SIGIR}, pages 284--291, 2006.

\bibitem{KRS90}
C.~P. Kruskal, L.~Rudolph, and M.~Snir.
\newblock A complexity theory of efficient parallel algorithms.
\newblock {\em Theor. Comput. Sci.}, 71(1):95--132, 1990.

\bibitem{MJS07}
G.~S. Manku, A.~Jain, and A.~D. Sarma.
\newblock Detecting near-duplicates for web crawling.
\newblock In {\em Proc. 16th WWW}, pages 141--150, 2007.

\bibitem{mitzenmacher08hash}
M.~Mitzenmacher and S.~P. Vadhan.
\newblock Why simple hash functions work: exploiting the entropy in a data
  stream.
\newblock In {\em Proc. 19th ACM/SIAM Symposium on Discrete Algorithms (SODA)},
  pages 746--755, 2008.

\bibitem{Slammer}
D.~Moore, V.~Paxson, S.~Savage, C.~Shannon, S.~Staniford, and N.~Weaver.
\newblock Inside the slammer worm.
\newblock {\em IEEE Security and Privacy Magazine}, 1(4):33--39, 2003.

\bibitem{motwani95book}
R.~Motwani and P.~Raghavan.
\newblock {\em Randomized algorithms}.
\newblock Cambridge University Press, 1995.

\bibitem{pagh07linprobe}
A.~Pagh, R.~Pagh, and M.~Ru{\v z}i{\'c}.
\newblock Linear probing with constant independence.
\newblock {\em SIAM Journal on Computing}, 39(3):1107--1120, 2009.
\newblock Announced at STOC'07.

\bibitem{PT10}
M.~P\v{a}tra\c{s}cu and M.~Thorup.
\newblock On the $k$-independence required by linear probing and minwise
  independence.
\newblock In {\em Proc. 36th ICALP, Part I, LNCS 6198}, pages 715--726, 2010.

\bibitem{PT13:twisted}
M.~P\v{a}tra\c{s}cu and M.~Thorup.
\newblock Twisted tabulation hashing.
\newblock In {\em Proc. 23nd SODA}, pages 209--228, 2013.

\bibitem{SSW92}
C.-E. S\"{a}rndal, B.~Swensson, and J.~Wretman.
\newblock {\em Model Assisted Survey Sampling}.
\newblock Springer, 1992.

\bibitem{SWA03}
S.~Schleimer, D.~S. Wilkerson, and A.~Aiken.
\newblock Winnowing: Local algorithms for document fingerprinting.
\newblock In {\em Proc. SIGMOD}, pages 76--85, 2003.

\bibitem{schmidt95chernoff}
J.~P. Schmidt, A.~Siegel, and A.~Srinivasan.
\newblock Chernoff-{H}oeffding bounds for applications with limited
  independence.
\newblock {\em SIAM Journal on Discrete Mathematics}, 8(2):223--250, 1995.
\newblock Announced at SODA'93.

\bibitem{Sze06}
M.~Szegedy.
\newblock The {DLT} priority sampling is essentially optimal.
\newblock In {\em Proc. 38th STOC}, pages 150--158, 2006.

\bibitem{Tho06:conf}
M.~Thorup.
\newblock Confidence intervals for priority sampling.
\newblock In {\em Proc. SIGMETRICS}, pages 252--263, 2006.

\bibitem{Tho13:bottomk}
M.~Thorup.
\newblock Bottom-$k$ and priority sampling, set similarity and subset sums with
  minimal independence.
\newblock In {\em Proc. 45th STOC}, pages 371--378, 2013.

\bibitem{TZ12}
M.~Thorup and Y.~Zhang.
\newblock Tabulation-based 5-independent hashing with applications to linear
  probing and second moment estimation.
\newblock {\em SIAM Journal on Computing}, 41(2):293--331, 2012.
\newblock Announced at SODA'04 and ALENEX'10.

\bibitem{wegman81kwise}
M.~N. Wegman and L.~Carter.
\newblock New classes and applications of hash functions.
\newblock {\em Journal of Computer and System Sciences}, 22(3):265--279, 1981.
\newblock Announced at FOCS'79.

\bibitem{YC06}
H.~Yang and J.~P. Callan.
\newblock Near-duplicate detection by instance-level constrained clustering.
\newblock In {\em Proc. 29th SIGIR}, pages 421--428, 2006.

\end{thebibliography}
\end{document}